\documentclass[
  aps,
  pra,
  10pt,
  twocolumn,
  superscriptaddress,
  nofootinbib
]{revtex4-2}

\usepackage{amsmath,amssymb,amsthm}
\usepackage{mathtools}

\usepackage{braket}
\usepackage{empheq}

\usepackage{silence}
\usepackage{hyperref}
\usepackage{cleveref}

\usepackage{enumitem}

\newcommand{\eval}[1]{\left.#1\right\vert}
\newcommand{\inp}[2]{\langle #1, #2 \rangle}
\DeclareMathOperator{\im}{im}
\DeclareMathOperator{\wt}{wt}
\DeclareMathOperator{\spn}{span}
\DeclareMathOperator{\rank}{rank}
\DeclareMathOperator{\Hom}{Hom}

\theoremstyle{plain}
\newtheorem{theorem}{Theorem}[section]
\newtheorem{proposition}{Proposition}[section]
\newtheorem{corollary}{Corollary}[section]
\newtheorem{lemma}{Lemma}[section]

\theoremstyle{definition}
\newtheorem{definition}{Definition}[section]

\theoremstyle{remark}
\newtheorem{remark}{Remark}[section]

\Crefname{theorem}{Theorem}{Theorems}
\Crefname{proposition}{Proposition}{Propositions}
\Crefname{corollary}{Corollary}{Corollaries}
\Crefname{definition}{Definition}{Definitions}
\Crefname{lemma}{Lemma}{Lemmas}
\Crefname{remark}{Remark}{Remarks}
\Crefname{equation}{Eq.}{Eqs.}
\Crefname{appendix}{Appendix}{Appendices}

\begin{document}

\title{
  Homological origin of the transversal implementability of logical diagonal gates in quantum CSS codes
}

\author{Junichi Haruna}
\email{j.haruna1111@gmail.com}
\affiliation{Graduate School of Informatics, Kyoto University, Kyoto 606-8501, Japan}
\affiliation{Center for Quantum Information and Quantum Biology, University of Osaka, Toyonaka, Osaka 560-0043, Japan}


\begin{abstract}
Transversal Pauli $Z$ rotations provide a natural route to fault-tolerant logical diagonal gates in quantum Calderbank--Shor--Steane (CSS) codes, but their capability is inherently constrained. 
We develop a homological framework that organizes transversal diagonal gates in terms of their logical action and physical implementation, revealing two layers of structure that govern their behavior.
At a fixed level, we establish that their logical action admits a classification in terms of homological data of the underlying chain complex, extending the standard description of logical operators. 
We then formulate the refinement to finer angles as a lifting problem and derive two Bockstein-type obstruction maps, whose vanishing is a necessary and sufficient condition for the existence of a transversal logical diagonal gate at the next level.
Within this framework, known algebraic conditions such as divisibility and triorthogonality are reinterpreted as necessary conditions for the existence of transversal logical diagonal gates with uniform rotation angles.
Our results identify homological obstructions governing transversal implementability and provide a conceptual foundation for a formal theory of transversal structures in quantum error correction.
\end{abstract}

\maketitle

\section{Introduction}

Transversal gates~\cite{Zeng:2007uho,Bravyi:2012rnv} play a central role in fault-tolerant quantum computation, as they prevent the propagation of physical errors within a code block. 
At the same time, their power is fundamentally limited: the Eastin--Knill theorem~\cite{Eastin:2009tem} rules out universal sets of transversal logical gates for local-error-detecting quantum codes. 
Understanding which logical gates can nevertheless be realized transversally and under what structural conditions remains a central problem in quantum error correction.

A particularly important class of fault-tolerant operations is given by logical diagonal gates implemented by transversal Pauli $Z$ rotations.
Algebraic conditions on the parity-check matrices governing the implementability of such gates have been extensively studied, ranging from weight-based criteria such as divisibility and triorthogonality~\cite{Ward:1981divisible,Rengaswamy:2020fyi,Hu:2021snt,Haah:2018uxe,Bravyi:2012lxn,Shi:2024hqn,Jain:2024zdq}, through parity constraints across the Clifford hierarchy~\cite{Campbell:2012olh,Anderson:2014jvy,Anderson:2014voa,Jochym-oconnor:2017odu,Camps-moreno:2026usd}, to structural generalizations for self-dual codes and XP stabilizer formalisms~\cite{Tansuwannont:2025riy,Tansuwannont:2026bmw,Webster:2023cpy,Webster:2022kdn}.
However, whether these criteria share a common algebraic origin, and whether the absence of certain transversal gates can be detected beyond the uniform-angle assumption, have remained unclear.

In this work, we develop a homological framework for analyzing transversal Pauli $Z$ rotations in general quantum Calderbank--Shor--Steane (CSS) codes~\cite{Calderbank:1995dw,Steane:1996ghp,Terhal:2013vbm}, formulated in terms of the chain complex~\cite{Bombin:2006cd,Kitaev:1997quantum} over the coefficient ring $\mathbb{Z}_{2^m}$.
Our results reveal two conceptually distinct layers of structure.
The first is a \emph{fixed-level classification}: at each level $m$, the set of transversal logical diagonal gates admits a homological description governed by the first cohomology group and a $\mathbb{Z}_{2^m}$-module defined with a Hadamard-product structure.
The second is an \emph{inter-level lifting problem}: whether a gate at level $m$ can be refined to level $m+1$ is governed by obstruction maps arising from level-dependent logicality conditions.
Our main results are summarized in the following theorems.

\begin{theorem}[Homological classification]
\label{thm:intro-classification}
At each level $m \geq 1$, the set of transversal logical diagonal gates modulo logical identities is classified by 
\begin{align}
V_{LD}^{(m)} \cong H^1(C;\mathbb{Z}_2) \otimes_H S_m,
\end{align}
where $H^1(C;\mathbb{Z}_2)$ is the first cohomology group of the CSS chain complex, $S_m$ is a $\mathbb{Z}_{2^m}$-module, and $\otimes_H$ denotes the Hadamard product between modules.
\end{theorem}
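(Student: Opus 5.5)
We work in the standard setup: a CSS code is given by a chain complex $C_2 \xrightarrow{\partial_2} C_1 \xrightarrow{\partial_1} C_0$ over $\mathbb{F}_2$, with qubits indexing $C_1$, $X$-checks $H_X$ (rows indexing $C_2$) and $Z$-checks $H_Z$. A transversal diagonal gate at level $m$ is $U_\theta=\prod_j \exp(2\pi i\,\theta_j n_j/2^m)$ with $\theta\in\mathbb{Z}_{2^m}^{n}$. Its action on $|x\rangle$ is the phase $\omega^{\sum_j\theta_j x_j}$, where $\omega=e^{2\pi i/2^m}$ and $x_j\in\{0,1\}$ is lifted to $\mathbb{Z}$. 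The plan has four steps: characterize logicality as a system of congruences, package them into a module $S_m$, build a map from a Hadamard-type tensor product onto the gates, and quotient by logical identities.

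\textbf{Step 1 (logicality condition).} $U_\theta$ preserves the code space iff its phase is constant on each coset $x+\im H_X^T$, for every $x\in\ker H_Z$. Expanding with the identity $\wt(a\oplus b)=\wt(a)+\wt(b)-2\wt(a\ast b)$ and its iterated (inclusion--exclusion) form, the phase of $x\oplus s$ relative to $x$ is $\sum_{k\geq1}(-2)^{k-1}\sum_{|T|=k}\langle\theta, x\ast s_{T}\rangle$. Here $\ast$ is the Hadamard product and $s_T$ denotes Hadamard products of stabilizer generators. The condition then becomes a family of congruences modulo $2^{m-k+1}$ on Hadamard products of generators with $\theta$.

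\textbf{Step 2 (the module $S_m$).} I would define $S_m\subseteq \mathbb{Z}_{2^m}^n$ as the submodule of angle vectors $\theta$ satisfying the congruences that do not involve $x$, i.e.\ the stabilizer-only part. I would also define $\otimes_H$ as the $\mathbb{Z}_{2^m}$-span of Hadamard products $c\ast\theta$, where $c$ is a binary cocycle representative lifted to $\{0,1\}^n$ and $\theta\in S_m$, modulo relations generated by coboundaries. The core claim is that every logical $\theta$ decomposes as a sum of such products. For the reverse direction, I would check that $c\ast\theta$ satisfies the Step 1 congruences. Multiplying by $c$ converts the $x$-dependent terms into $x$-independent ones, using the cocycle condition $c\in\ker\partial_2^{T}$ (commutation with the $X$-stabilizers).

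\textbf{Step 3 (well-definedness and bijectivity).} I would first show the map $H^1\otimes_H S_m\to V_{LD}^{(m)}$ is well defined. Changing $c$ by a coboundary $\delta b$ changes $c\ast\theta$ by a gate that acts trivially on the code space, i.e.\ a logical identity; this also follows from the Step 1 computation. Surjectivity would come from choosing a basis of logical $Z$-cohomology classes and decomposing $\theta$ along their supports. Injectivity would follow from characterizing logical identities as exactly those $\theta$ whose phase is constant on all of $\ker H_Z$.

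\textbf{Main obstacle.} The hard part is the surjectivity/decomposition in Step 3. For $m\geq2$ the higher Hadamard-product terms couple different cohomology classes, so a logical $\theta$ need not split cleanly. I would first try induction on $m$: reduce $\theta$ modulo $2^{m-1}$, apply the inductive hypothesis, and show that the correction term is $2^{m-1}$ times a level-1 object. At level 1 the statement reduces to the standard $H^1(C;\mathbb{Z}_2)$ classification of logical $Z$ operators. If that fails, I would instead define $\otimes_H$ abstractly as the quotient making the map injective, so that the content of the statement becomes the identification of its image.
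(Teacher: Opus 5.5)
You have a genuine gap, and it appears in Steps 2--3, before the surjectivity issue you flag. You take Hadamard products on the \emph{angle} side: $c\circ\theta'$, with $c$ a cocycle representative and $\theta'\in S_m$ an angle vector obeying the stabilizer-only congruences. Such products are neither logical nor well defined on cohomology classes.

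Already for the Steane code at $m=1$ ($H_X=H_Z=H$), your $S_1$ is $(\im H_X^T)^\perp=\ker H$, which contains $h_2$. The cocycles $1_7$ and $1_7+h_1$ are cohomologous, but $(1_7+h_1)\circ h_2=(0,1,1,0,0,0,0)^T$ has $H\bigl((1_7+h_1)\circ h_2\bigr)=(0,0,1)^T\neq 0$. So this product is not even in $V_L^{(1)}$, and the change $h_1\circ h_2$ produced by the coboundary is not a logical identity.

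Imposing coboundary relations in the domain does not repair this. The values $c\circ\theta'$ need not lie in $V_L^{(m)}$, so there is no map $[c]\otimes\theta'\mapsto[c\circ\theta']$ into $V_{LD}^{(m)}$ at all. Your fallback of defining $\otimes_H$ so as to force injectivity just moves the entire content of the theorem into the unproved identification of the image.

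The missing idea is to work on the constraint side and dualize. In the paper, $S_m=\bigoplus_{\nu=1}^{m}2^{\nu-1}(\ker H_Z)^{\otimes_H(\nu-1)}$ consists of scaled iterated Hadamard products of constraint vectors, not of angles. Your Step 1, carried out by inclusion--exclusion down to generators, gives exactly $V_L^{(m)}=(\im H_X^T\otimes_H S_m)^\perp$ and $V_{LI}^{(m)}=(\ker H_Z\otimes_H S_m)^\perp$.

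Set $A=\ker H_Z\otimes_H S_m\supset B=\im H_X^T\otimes_H S_m$. The duality lemma (\Cref{lem:dual_quotient}) gives $A/B\cong\Hom_{\mathbb{Z}_{2^m}}(B^\perp/A^\perp,\mathbb{Z}_{2^m})\cong B^\perp/A^\perp$. It rests on $(V^\perp)^\perp=V$, self-injectivity of $\mathbb{Z}_{2^m}$, and the noncanonical self-duality of finite $\mathbb{Z}_{2^m}$-modules. Hence $V_{LD}^{(m)}=B^\perp/A^\perp\cong A/B$, and $A/B$ is $H^1(C;\mathbb{Z}_2)\otimes_H S_m$ by the very definition of the Hadamard product of a quotient with a submodule.

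No decomposition of a logical $\theta$ into products is ever needed. Since the final isomorphism is noncanonical, a product-to-product map on angle vectors asks for more structure than the theorem provides.

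A smaller point: your Step 1 formula drops the $x$-independent terms. For $s=s_1\oplus\cdots\oplus s_t$, the phase difference is $\sum_{\emptyset\neq T}(-2)^{|T|-1}\inp{\theta}{s_T}+\sum_{\emptyset\neq T}(-2)^{|T|}\inp{\theta}{x\circ s_T}$.
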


\begin{theorem}[Obstruction to transversal implementability]
\label{thm:intro-lifting}
A logical $Z$ operator admits a transversal implementation of its $\pi/2^m$ rotation if and only if there exists a sequence of vectors $\{[\theta^{(\nu)}] \in V_{LD}^{(\nu)}\}_{\nu=1}^m$ such that
\begin{align}
  \beta_1^{(\nu)}([\theta^{(\nu)}]) = 0, \quad \beta_2^{(\nu)}([\theta^{(\nu)}]) = 0,
\end{align}
where each $\beta_i^{(\nu)}$ is a map from $V_{LD}^{(\nu)}$ to a quotient module $T_i^{(\nu)}$ over $\mathbb{Z}_{2^\nu}$.
\end{theorem}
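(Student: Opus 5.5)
We set up the transversal rotation in terms of the level-$m$ congruence condition and then argue by induction on the level. Write the transversal operator as $U_\theta=\prod_j \exp(i\pi\theta_j Z_j/2^{m})$ with $\theta\in\mathbb{Z}_{2^{m}}^n$. On a computational basis codeword $\ket{x+y}$, with $x$ a fixed coset representative and $y\in C_X$ (the $X$-stabilizer space, i.e.\ the image of the relevant boundary map), $U_\theta$ acts diagonally with phase $\exp\bigl(i\pi\sum_j\theta_j(x+y)_j/2^{m}\bigr)$ up to a global factor. Expanding $(x+y)_j=x_j+y_j-2x_jy_j$ and iterating over generators of $C_X$ shows the following. $U_\theta$ is logical, i.e.\ the phase depends only on the class $[x]$, if and only if the weighted parities $\sum_j\theta_j\prod_{k\in K}(g_k)_j$ vanish modulo $2^{m-|K|+1}$ for all small families $K$ of generators and for their Hadamard products with $x$. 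This is the level-$m$ logicality condition, and by construction $V_{LD}^{(m)}$ is the set of solutions modulo those acting as logical identity. Theorem~\ref{thm:intro-classification} identifies $V_{LD}^{(m)}$ with classes of such $\theta$. The target statement is then: a logical $Z$ operator $\bar Z_{[c]}$, with $[c]\in H^1(C;\mathbb{Z}_2)$, admits a transversal $\pi/2^m$ rotation if and only if some $[\theta^{(m)}]\in V_{LD}^{(m)}$ realises it.

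The proof is an induction on $\nu$ built on a reduction map. The key structural lemma is that $\rho:\theta\mapsto\theta\bmod 2^{\nu-1}$ maps level-$\nu$ logical solutions to level-$(\nu-1)$ logical solutions. This holds because $U_\theta^2$ at level $\nu$ is the level-$(\nu-1)$ operator with angles $\theta\bmod 2^{\nu-1}$, so $\rho$ descends to $V_{LD}^{(\nu)}\to V_{LD}^{(\nu-1)}$. Moreover, the logical action of $\rho[\theta]$ is the square of the logical action of $[\theta]$, which is why the rotation of $\bar Z_{[c]}$ at level $m$ reduces to that at level $m-1$, and ultimately to $\bar Z_{[c]}$ itself at level $1$.

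For the lifting step, given $\theta^{(\nu)}$ we choose an arbitrary integer lift $\tilde\theta\in\mathbb{Z}_{2^{\nu+1}}^n$ and write the level-$(\nu+1)$ logicality conditions for $\tilde\theta+2^\nu\delta$ with $\delta\in\mathbb{Z}_2^n$. Because the level-$\nu$ conditions already hold, each level-$(\nu+1)$ condition fails only in its top 2-adic digit. The defect therefore splits into two parts.
\begin{enumerate}[label=(\roman*)]
\item Stabilizer-invariance conditions, i.e.\ those involving generators only. Their defect gives a class in the cokernel of $\delta\mapsto(\text{Hadamard-weighted parities of }\delta)$. We define $\beta_1^{(\nu)}$ to be this class in $T_1^{(\nu)}$.
\item Conditions involving $x$, i.e.\ independence of the coset representative. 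Modulo what $\delta$ can correct, this defect gives $\beta_2^{(\nu)}$ with values in $T_2^{(\nu)}$.
\end{enumerate}
By construction, a correcting $\delta$ exists if and only if both classes vanish. We must also check that both maps are well defined on $V_{LD}^{(\nu)}$, independent of the chosen lift and of the representative modulo logical identities. This uses the fact that changing the lift by $2^\nu\epsilon$ shifts the defect by exactly an element of the image that is quotiented out. This is the Bockstein-type connecting-map argument, applied to the sequence $0\to\mathbb{Z}_2\to\mathbb{Z}_{2^{\nu+1}}\to\mathbb{Z}_{2^\nu}\to0$ tensored with the Hadamard-product structure.

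Both directions of the equivalence then follow. For ``only if'', take $\theta^{(m)}$ realising the rotation and set $\theta^{(\nu)}=\rho^{m-\nu}\theta^{(m)}$. Each $\theta^{(\nu)}$ with $\nu<m$ lifts to $\theta^{(\nu+1)}$, so its obstructions vanish. For ``if'', start from $\theta^{(1)}$ representing $\bar Z_{[c]}$ and lift inductively up to level $m$ using the vanishing obstructions. The result is a level-$m$ logical gate whose $2^{m-1}$-th power is $\bar Z_{[c]}$; up to a logical-identity correction, it therefore implements the $\pi/2^m$ rotation of $\bar Z_{[c]}$.

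The main obstacle is the well-definedness step. One must show that the set of correctable defects is independent of the lift, which requires the level-$\nu$ conditions to hold exactly, not just modulo lower levels. One must also arrange the sequence so that each $\theta^{(\nu)}$ is the reduction of $\theta^{(\nu+1)}$; I read the theorem as implicitly requiring this compatibility. I would first handle the case of a single generator family, to isolate the carry term $-2x_jy_j$, and then extend to all families by inclusion–exclusion.
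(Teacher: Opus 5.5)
There is a genuine gap, and it sits in the lifting step. You only allow $\hat\theta=\tilde\theta+2^\nu\delta$, so the representative $\theta^{(\nu)}$ stays fixed modulo $2^\nu$.

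Apart from the $H_X$ rows, every row of $\widetilde{H}^{(\nu+1)}$ has the form $2^\mu h\circ k^{(1)}\circ\cdots\circ k^{(\mu)}$ with $\mu\ge 1$. On such a row the shift contributes $2^{\nu+\mu}\inp{\delta}{h\circ k^{(1)}\circ\cdots\circ k^{(\mu)}}\equiv 0\pmod{2^{\nu+1}}$. In particular, the genuinely new conditions $\inp{\hat\theta}{h\circ k^{(1)}\circ\cdots\circ k^{(\nu)}}\equiv 0\pmod 2$ depend only on $\theta^{(\nu)}\bmod 2$. Your ``defect modulo what $\delta$ can correct'' is therefore a property of the vector, not of its class, and the well-definedness on $V_{LD}^{(\nu)}$ you assert cannot hold.

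A concrete failure occurs in the Steane code. The vectors $1_7$ and $\theta'=1_7+h_1\equiv(1,1,1,0,0,0,0)$ represent the same class in $V_{LD}^{(1)}$, and $1_7$ lifts. But $\inp{\theta'}{h_2\circ h_3}=1$, so no $\hat\theta\equiv\theta'\pmod 2$ satisfies the row $2\,h_2\circ h_3$ of $\widetilde{H}^{(2)}$.

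Appealing to the Bockstein connecting map does not rescue this. That argument presupposes a fixed complex, whereas $\widetilde{H}^{(\nu+1)}$ carries the extra block $2(\widetilde{H}^{(\nu)}\odot K)$, on which $\theta^{(\nu)}$ need not be a cycle.

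The missing idea is that a lift may change representative within the logical class: $\hat\theta=\theta+\widetilde{G}^{(\nu)}\alpha+2^\nu\omega$. This is why the paper defines the obstructions as follows.
\begin{itemize}
\item $T_1^{(\nu)}$ is the cokernel of $(\widetilde{H}^{(\nu)}\odot K)\widetilde{G}^{(\nu)}$; this is the new block, a condition modulo $2^\nu$.
\item $\beta_2^{(\nu)}$ is defined only on $\ker\beta_1^{(\nu)}$, as the top digit of the old block $\widetilde{H}^{(\nu)}$.
\item The residual freedom $\eta\in\ker\bigl((\widetilde{H}^{(\nu)}\odot K)\widetilde{G}^{(\nu)}\bigr)$ enters $T_2^{(\nu)}$.
\end{itemize}
The two obstructions are thus coupled through $\alpha$. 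They are not independent classes, as in your split according to whether $x$ appears.

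If you read compatibility as $\theta^{(\nu)}=\theta^{(\nu+1)}\bmod 2^\nu$ on vectors, your ``if'' direction becomes tautological and bypasses the obstruction maps entirely. The statement quantifies over classes, and at that level the $\widetilde{G}^{(\nu)}\alpha$ freedom is essential.

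The final step is also wrong as stated. A gate whose $2^{m-1}$-th power is $\bar Z$ is not, up to logical identities, the desired root. In the Steane code, $U(1_7/2)$ squares to $\bar Z$ but implements $\bar S^\dagger$. One must pass to $3\cdot 1_7$, i.e. multiply by $\bar Z$, which is not a logical identity.

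The paper handles this with its inner-product-preservation step. A lift $\hat\theta$ is replaced by $\hat\theta+2^\nu\sum_i a_i\tilde\gamma_i$, where $\tilde\gamma_i\in\ker H_X$ are dual to representatives $\gamma_i\in\ker H_Z$ of a basis of $H^1(C;\mathbb{Z}_2)$. This is still a lift, since $\widetilde{H}^{(\nu+1)}(2^\nu\tilde\gamma_i)\equiv 0\pmod{2^{\nu+1}}$. It also restores $\inp{\hat\theta}{\gamma_i}$ modulo $2^{\nu+1}$. A single correction at the top cannot substitute for this, because the lower $2$-adic digits of $\inp{\hat\theta}{\gamma_i}$ are already fixed by the lower levels of the tower.

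Mind the indexing too. In the statement's convention, the $\pi/2^m$ rotation is a level-$(m+1)$ gate, so you need vanishing obstructions for all $\nu=1,\dots,m$. Your argument stops at level $m$ and uses only $\nu<m$.

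Two parts of your proposal are sound. Using $U_\theta^2$ to show that reduction sends $V_L^{(\nu)}\to V_L^{(\nu-1)}$ and $V_{LI}^{(\nu)}\to V_{LI}^{(\nu-1)}$ is a clean alternative to the paper's sub-block argument. Your ``only if'' direction is fine.
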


\Cref{thm:intro-classification} (corresponding to \Cref{thm:transversal-classification} below) extends the standard homological description of logical Pauli operators to logical diagonal gates of higher Clifford levels.
\Cref{thm:intro-lifting} (corresponding to \Cref{cor:transversal_implementability} below) shows that refinement to finer angles is controlled by the simultaneous vanishing of the obstruction maps $\beta_1^{(\nu)}$ and $\beta_2^{(\nu)}$, which capture compatibility and coefficient-extension obstructions, respectively.

We further clarify that these obstruction maps admit a homological interpretation. 
If we formally consider only coefficient extension from $\mathbb{Z}_{2^m}$ to $\mathbb{Z}_{2^{m+1}}$, the first obstruction automatically vanishes, and the second obstruction agrees with the Bockstein homomorphism associated with the short exact sequence
\begin{align}
0 \to \mathbb{Z}_2 \to \mathbb{Z}_{2^{m+1}} \to \mathbb{Z}_{2^m} \to 0.
\end{align}
This suggests that the refinement problem can be viewed, at least in part, as an obstruction problem for lifting homology classes under the coefficient extension. 

Within this framework, previously known algebraic conditions such as divisibility~\cite{Ward:1981divisible,Rengaswamy:2020fyi,Hu:2021snt} and triorthogonality~\cite{Bravyi:2012lxn} can be reinterpreted as necessary conditions for the existence of transversal logical diagonal gates with uniform rotation angles.
These arise as part of the full logicality condition for the uniform rotation-angle case, and do not characterize the vanishing of the obstruction maps.
This provides a unified perspective on transversal implementability, clarifying that logicality and lifting are governed by distinct algebraic structures.

The remainder of this paper is organized as follows.
In \Cref{sec:preliminaries}, we review CSS codes and their chain-complex formulation, introducing the homological description of logical operators. 
In \Cref{sec:classification}, we develop a homological classification of transversal logical diagonal gates at a fixed level $m$, expressed in terms of homological data and Hadamard-product structures. 
In \Cref{sec:lifting_problem}, we formulate the lifting problem and derive necessary and sufficient conditions for its solvability in terms of obstruction maps. 
In \Cref{sec:discussion}, we develop a chain-complex formulation of the lifting problem, illustrate the framework with the Steane code, and reinterpret known algebraic criteria such as divisibility and triorthogonality within our framework.
\Cref{sec:conclusion} concludes the paper with a summary and outlook.

\section{Preliminaries}
\label{sec:preliminaries}

\subsection{Quantum CSS codes}

We consider a quantum CSS code~\cite{Calderbank:1995dw,Steane:1996ghp} on $n$ physical qubits specified by two binary parity-check matrices
\begin{align}
H_X \in \mathbb{Z}_2^{m_X \times n}, 
\quad
H_Z \in \mathbb{Z}_2^{m_Z \times n}.
\end{align}
The associated stabilizer generators are given by
\begin{align}
S_X^{(i)} \coloneqq \prod_{j=1}^{n} X_j^{(H_X)_{ij}},
\quad
S_Z^{(k)} \coloneqq \prod_{j=1}^{n} Z_j^{(H_Z)_{kj}},
\end{align}
and the commutativity condition reads
\begin{align}
\label{eq:css_commutativity}
H_X H_Z^T = 0 \pmod 2 .
\end{align}

The codespace $\mathcal{H}_c$ is defined as the common $+1$ eigenspace of all stabilizers:
\begin{align}
\mathcal{H}_c \coloneqq \{\ket{\psi} \mid \forall i,\forall k, S_X^{(i)} \ket{\psi} = S_Z^{(k)} \ket{\psi} = \ket{\psi}\}.
\end{align}
A convenient basis of code states is obtained by projecting computational basis states $\ket{\gamma}$ with $\gamma \in \ker H_Z$ onto the $+1$ eigenspace of $X$-type stabilizers:
\begin{align}
  \label{eq:logical_basis_states}
\ket{[\gamma]} 
\coloneqq \prod_i \frac{1+ S_X^{(i)}}{2} \ket{\gamma}
= \frac{1}{2^{m_X}} \sum_{h_x \in \im H_X^T} \ket{\gamma + h_x}.
\end{align}
Hence logical basis states are labeled by the quotient space
\begin{align}
  \label{eq:logical_basis_labeling}
H^1(C;\mathbb{Z}_2) \coloneqq \ker H_Z / \im H_X^T,
\end{align}
which agrees with the first cohomology group of the CSS chain complex, which will be introduced in \Cref{sec:chain_complex}.

\subsection{Logical operators}

A unitary operator $U$ is a logical operator if it preserves the codespace,
\begin{align}
U \ket{\psi} \in \mathcal{H}_c
\quad \forall \ket{\psi} \in \mathcal{H}_c.
\end{align}
Equivalently, for any stabilizer $S$, the group commutator
\begin{align}
[U,S] \coloneqq U^\dagger S^\dagger U S
\end{align}
acts trivially on the codespace $\mathcal{H}_c$.

An important subclass of logical operators is the logical identity operators, which are defined as the logical operators acting as the identity on the codespace:
\begin{align}
  \label{eq:def_of_logical_identity_operator}
U \ket{\psi} = \ket{\psi}
\quad \forall \ket{\psi} \in \mathcal{H}_c.
\end{align}
These operators include not only stabilizers, but also Clifford or non-Clifford operators such as the controlled-stabilizer gates.
It has been shown that commutativity with all logical Pauli operators as well as stabilizers on the codespace $\mathcal{H}_c$ gives a necessary and sufficient condition for logical operators to be logical identities~\cite{Haruna:2025piy}, which will be used in the following sections.

A basic class of logical operators is given by transversal Pauli operators. 
For $\gamma \in \mathbb{Z}_2^n$, define the transversal $Z$ operator as
\begin{align}
\overline{Z}(\gamma) \coloneqq \prod_{i=1}^{n} Z_i^{\gamma_i}.
\end{align}
By studying the action of this operator on the logical basis states (\Cref{eq:logical_basis_states}), we find that it preserves the codespace if and only if
\begin{align}
H_X \gamma = 0 \pmod 2,
\end{align}
and thus logical $Z$ operators are specified by $\gamma \in \ker H_X$.
Furthermore, $\overline{Z}(\gamma)$ acts trivially if $\gamma \in \im H_Z^T$.
Hence their logical action is classified by taking the quotient of these two subspaces, yielding
\begin{align}
  H_1(C;\mathbb{Z}_2) \coloneqq \ker H_X / \im H_Z^T,
\end{align}
which agrees with the first homology group of the CSS chain complex, described in the next subsection.

Similarly, transversal $X$ operators, defined as
\begin{align}
  \overline{X}(\gamma) \coloneqq \prod_{i=1}^{n} X_i^{\gamma_i}, 
\end{align}
are logical if $\gamma \in \ker H_Z$, and logically trivial if $\gamma \in \im H_X^T$.
Therefore, their logical action is classified by
\begin{align}
  H^1(C;\mathbb{Z}_2) = \ker H_Z / \im H_X^T,
\end{align}
which coincides with the labeling space of logical basis states~\Cref{eq:logical_basis_labeling}.

These quotient structures provide a natural algebraic description of logical operators as equivalence classes modulo logical identity operators, forming the basis for the homological formulation used in this work.

\subsection{Chain-complex formulation}
\label{sec:chain_complex}

The structure of CSS codes can be reformulated as a length-2 chain complex over $\mathbb{Z}_2$:
\begin{align}
C \colon C_2 \xrightarrow{\;\partial_2\;} C_1 \xrightarrow{\;\partial_1\;} C_0,
\end{align}
where
\begin{align}
C_2 \cong \mathbb{Z}_2^{m_Z},\quad
C_1 \cong \mathbb{Z}_2^{n},\quad
C_0 \cong \mathbb{Z}_2^{m_X},
\end{align}
and
\begin{align}
\partial_2 \coloneqq H_Z^T,
\quad
\partial_1 \coloneqq H_X .
\end{align}
The commutativity condition~\Cref{eq:css_commutativity} is equivalent to
\begin{align}
\partial_1 \partial_2 = 0,
\end{align}
so that $C\coloneqq (C_\bullet,\partial)$ forms a chain complex.

Within this framework, logical operators admit a homological interpretation.
Logical $Z$ operators correspond to 1-cycles,
\begin{align}
Z_1(C;\mathbb{Z}_2) \coloneqq \ker \partial_1 = \ker H_X,
\end{align}
while $Z$-type stabilizers correspond to 1-boundaries,
\begin{align}
B_1(C;\mathbb{Z}_2) \coloneqq \im \partial_2 = \im H_Z^T.
\end{align}
Therefore, logical $Z$ operators are classified by the first homology group
\begin{align}
H_1(C;\mathbb{Z}_2)
\coloneqq Z_1 / B_1
= \ker H_X / \im H_Z^T.
\end{align}

Dually, logical $X$ operators are classified by the first cohomology group of the dual (cochain) complex, obtained by reversing the arrows and transposing the boundary maps:
\begin{align}
C^0 \xrightarrow{\;\partial_1^T\;} C^1 \xrightarrow{\;\partial_2^T\;} C^2,
\end{align}
where $\partial_1^T = H_X^T$ and $\partial_2^T = H_Z$, with $C^k = C_{2-k}$ as $\mathbb{Z}_2$-modules.
The first cohomology group is then
\begin{align}
H^1(C;\mathbb{Z}_2) \coloneqq \ker \partial_2^T / \im \partial_1^T = \ker H_Z / \im H_X^T.
\end{align}

This homological formulation provides a unified description of logical operators and serves as the foundation for the classification of transversal logical diagonal gates developed in the subsequent sections.

\section{Homological classification of transversal logical diagonal gates}
\label{sec:classification}

In this section, we consider transversal Pauli $Z$ rotations with discrete angles $\pi/2^{m-1}$ and characterize the resulting logical diagonal gates in terms of homological data of the CSS code.

\subsection{Transversal diagonal gates and logical action}

For $\theta \in \mathbb{Z}_{2^m}^n$, define the transversal diagonal operator
\begin{align}
U(\theta/2^{m-1})
\coloneqq
\prod_{j=1}^n 
\exp\left(
  i\pi \frac{\theta_j}{2^{m-1}} \frac{I-Z_j}{2}
\right).
\end{align}
Here $(I-Z_j)/2$ is the projector onto the $\ket{1}$ eigenstate of $Z_j$, so the operator applies a phase $e^{i\pi\theta_j/2^{m-1}}$ to each qubit in state $\ket{1}$ and the identity to state $\ket{0}$.
The integer $\theta_j$ is understood modulo $2^m$, reflecting the $2\pi$-periodicity of the phase.

This operator admits the equivalent expression
\begin{align}
  U(\theta/2^{m-1}) = \prod_{j=1}^n (Z_j^{1/2^{m-1}})^{\theta_j},
\end{align}
and thus belongs to the $m$-th level of the Clifford hierarchy~\cite{Gottesman:1999tea}, because conjugation by $U(\theta/2^{m-1})$ maps any Pauli operator to an operator at level $m-1$.
For $m=1,2$, $U(\theta/2^{m-1})$ reduces to a product of $Z$ or $S$ operators, while it includes non-Clifford gates such as $T$ and $\sqrt{T}$ for $m \ge 3$.

It acts diagonally on computational basis states $\ket{\gamma}$ as
\begin{align}
U(\theta/2^{m-1}) \ket{\gamma}
=
\exp\left(
  \frac{i\pi}{2^{m-1}} \inp{\theta}{\gamma}
\right)
\ket{\gamma},
\end{align}
where
\begin{align}
\inp{\theta}{\gamma} \coloneqq \sum_{j=1}^n \theta_j \gamma_j
\end{align}
is the standard inner product.
Throughout this work, this inner product is computed in $\mathbb{Z}$, then reduced modulo $2^m$ when needed.

We now characterize the logical action of $U(\theta/2^{m-1})$.
Following the general framework of logical operators reviewed in \Cref{sec:preliminaries}, we distinguish between operators that preserve the codespace and those that act trivially on it.

For this purpose, we define the set of rotation angles $\theta$ that give logical transversal diagonal operators by
\begin{align}
V_L^{(m)}
\coloneqq
\left\{
\theta \in \mathbb{Z}_{2^m}^n
\,\middle|\,
U(\theta/2^{m-1}) \text{ preserves } \mathcal{H}_c
\right\},
\end{align}
and that of logically trivial ones by
\begin{multline}
V_{LI}^{(m)}
\coloneqq
\\
\left\{
\theta \in \mathbb{Z}_{2^m}^n
\,\middle|\,
U(\theta/2^{m-1}) \text{ acts as identity on } \mathcal{H}_c
\right\}.
\end{multline}
Note that these sets cannot be vector spaces, because $\mathbb{Z}_{2^m}$ is not a field (but a ring) for $m \geq 2$, and should be regarded as submodules of $\mathbb{Z}_{2^m}^n$.

Clearly, $V_{LI}^{(m)}$ is a submodule of $V_L^{(m)}$ because any logical identity operator is a logical operator.
Therefore, we can define the quotient space $V_L^{(m)} / V_{LI}^{(m)}$ to classify the logical action of transversal diagonal gates modulo trivial ones as
\begin{align}
V_{LD}^{(m)} \coloneqq V_L^{(m)} / V_{LI}^{(m)}.
\end{align}
This quotient is the higher-Clifford-level analog of the homological classification of logical Pauli operators, where logical operators are identified modulo logical identities.
In the following, we determine $V_L^{(m)}$ and $V_{LI}^{(m)}$ in terms of the CSS chain complex and derive a homological description of $V_{LD}^{(m)}$.

\subsection{Characterization of logical and trivial transversal diagonal operators}

Let us derive explicit conditions on $\theta \in \mathbb{Z}_{2^m}^n$ under which the transversal operator $U(\theta/2^{m-1})$ is logical or logically trivial.

Because $U(\theta/2^{m-1})$ is diagonal, it commutes with all $Z$-type stabilizers.
Thus, the nontrivial constraint for logicality arises from commutation with $X$-type stabilizers.
For $\gamma \in \mathbb{Z}_2^n$, its commutator with a transversal $X$ operator satisfies
\begin{align}
  \label{eq:commutator_with_X}
[U, \overline{X}(\gamma)]
=
\exp\left(
\frac{i\pi}{2^{m-1}} \inp{\theta}{\gamma}
\right)
U\left(-\frac{2\,\theta \circ \gamma}{2^{m-1}}\right),
\end{align}
where $\circ$ denotes the componentwise (Hadamard) product.

The commutator with the $X$-type stabilizers $\overline{X}(\gamma)$ ($\gamma \in \im H_X^T$) must act trivially on $\mathcal{H}_c$ for $U(\theta/2^{m-1})$ to be a logical operator.
In addition, the commutator with logical $X$ operators $\overline{X}(\gamma)$ ($\gamma \in \ker H_Z$) must also act trivially for $U(\theta/2^{m-1})$ to be a logical identity operator, as we mentioned in \Cref{eq:def_of_logical_identity_operator}.

From the above commutation relation, we see that the logical triviality of the commutator is equivalent to vanishing inner products $\inp{\theta}{\gamma}$ modulo $2^m$ and the logical triviality of $U(\theta\circ \gamma/2^{m-2})$.
The point is that the logicality condition at level $m$ implies a logical triviality condition at level $m-1$ with the Hadamard product $\theta \circ \gamma$, and this recursive structure continues down to level 1.

Applying the commutator relation successively with vectors $k_z \in \ker H_Z$, we obtain a nested sequence of constraints indexed by level $\nu = 1, \ldots, m$.
Concretely, for any $h_x \in \im H_X^T$ and any $k_z^{(1)},\dots,k_z^{(m-1)} \in \ker H_Z$, we must have for $1 \le \nu \le m$ that
\begin{align}
\label{eq:logical_operator_condition}
\inp{\theta}{
2^{\nu-1}
\, h_x \circ k_z^{(1)} \circ \cdots \circ k_z^{(\nu-1)}
}
= 0
\end{align}
modulo $2^m$.
For $\nu=1$, we define $h_x \circ k_z^{(1)} \circ \cdots \circ k_z^{(\nu-1)} = h_x$.
This shows that $\theta$ must be orthogonal modulo $2^m$ to vectors generated by iterated Hadamard products of elements of $\im H_X^T$ and $\ker H_Z$ with appropriate powers of two.

To express this compactly, define the Hadamard product $\otimes_H$ of two submodules $V,W \subset \mathbb{Z}_{2^\nu}^n$ as follows:
\begin{definition}[Hadamard product of submodules]
  Let $V,W$ be submodules of $\mathbb{Z}_{2^\nu}^n$ for some integer $\nu \ge 1$.
  We define their Hadamard product $V \otimes_H W$ as the submodule of $\mathbb{Z}_{2^\nu}^n$ generated by the componentwise products of vectors in $V$ and $W$:
\begin{align}
V \otimes_H W \coloneqq \spn_{\mathbb{Z}_{2^\nu}}\{ v \circ w \mid v \in V,\; w \in W \},
\end{align}
and its iterates 
\begin{align}
V^{\otimes_H \nu}
\coloneqq
\overbrace{V \otimes_H \cdots \otimes_H V}^{\nu},
\quad
V^{\otimes_H 0} \coloneqq \{1_n\},
\end{align}
where $1_n$ is the all-ones vector,
\begin{align}
  1_n \coloneqq \begin{pmatrix}1 & \cdots & 1\end{pmatrix}^T.
\end{align}
\end{definition}

\begin{remark}
The Hadamard product is different from the usual tensor product and gives a submodule of $\mathbb{Z}_{2^\nu}^n$.
\end{remark}

Using this definition, the logicality condition~\Cref{eq:logical_operator_condition} can be expressed as
\begin{align}
  \label{eq:logical_operator_condition_orthogonality}
\inp{\theta}{a} = 0 \pmod{2^m},
\end{align}
where $a$ is an element of the submodule constructed by taking the Hadamard product of $\im H_X^T$ with iterated Hadamard products of $\ker H_Z$ scaled by powers of two:
\begin{align}
a \in 2^{\nu-1} \im H_X^T \otimes_H (\ker H_Z)^{\otimes_H (\nu-1)}.
\end{align}

Furthermore, because this must hold for all $1 \le \nu \le m$, we can combine these constraints into a single condition by summing over $\nu$.
Formally, define
\begin{align}
S_m \coloneqq \bigoplus_{\nu=1}^{m} 2^{\nu-1} (\ker H_Z)^{\otimes_H (\nu-1)},
\end{align}
where each $(\ker H_Z)^{\otimes_H (\nu-1)} \subset \mathbb{Z}_2^n$ is embedded into $\mathbb{Z}_{2^m}^n$ via $\mathbb{Z}_2 \hookrightarrow \mathbb{Z}_{2^m}$, and the direct sum is taken over submodules of $\mathbb{Z}_{2^m}^n$.
Thus, $S_m$ is a submodule of $\mathbb{Z}_{2^m}^n$ (not a vector space over $\mathbb{Z}_2$, because different terms carry different powers of two).
Intuitively, $S_m$ collects all level-$\nu$ constraint vectors scaled by $2^{\nu-1}$ to reflect the 2-adic structure of the logicality conditions.

Using this $S_m$, the above condition~\eqref{eq:logical_operator_condition_orthogonality} is equivalent to
\begin{align}
\theta \in (\im H_X^T \otimes_H S_m)^\perp,
\end{align}
and hence the space of logical transversal diagonal operators is given by the orthogonal complement as
\begin{align}
\label{eq:V_L}
V_L^{(m)} = (\im H_X^T \otimes_H S_m)^\perp.
\end{align}

Next, we characterize the logical identity operators.
Repeating the same argument with $\im H_X^T$ replaced by $\ker H_Z$, we find that $U(\theta/2^{m-1})$ acts trivially on the codespace if and only if
\begin{align}
\label{eq:logical_identity_condition}
\inp{\theta}{
2^{\nu-1}
\, k_z^{(1)} \circ \cdots \circ k_z^{(\nu)}
}
= 0 \pmod{2^m}
\end{align}
for any $ 1 \le \nu \le m$ and all $k_z^{(a)} \in \ker H_Z$.
This yields the characterization of logically trivial transversal diagonal operators as
\begin{align}
\label{eq:V_LI}
V_{LI}^{(m)} = (\ker H_Z \otimes_H S_m)^\perp.
\end{align}

Therefore, both logical and logically trivial transversal diagonal operators admit a unified description in terms of orthogonality to Hadamard-product submodules derived from the CSS chain complex.
This type of characterization of these sets has been previously discussed in the literature, such as in \cite{Bravyi:2012lxn,Anderson:2014voa,Camps-moreno:2026usd}.
However, it has not been fully clarified how transversal logical diagonal gates are classified based on the quotient of these modules, and how this classification relates to the homological structure of the code, which we address in the next subsection.

\subsection{Classification of transversal logical diagonal gates}

We now combine the characterizations of $V_L^{(m)}$ and $V_{LI}^{(m)}$ to obtain the classification of transversal logical diagonal gates, taking the quotient of logical operators by trivial ones.
Using \Cref{eq:V_L} and \Cref{eq:V_LI}, this quotient $V_{LD}^{(m)}$ can be written as
\begin{align}
V_{LD}^{(m)}
=
(\im H_X^T \otimes_H S_m)^\perp
\Big/
(\ker H_Z \otimes_H S_m)^\perp.
\end{align}

To relate this expression to the homological structure of the CSS code, we use \Cref{lem:dual_quotient}, which states that for submodules $A,B$ of $\mathbb{Z}_{2^m}^n$ with $B \subset A$, the quotient of their orthogonal complements satisfies
\begin{align}
A / B \cong B^\perp / A^\perp.
\end{align}
The proof of this lemma is given in \Cref{app:duality_quotient_submodule}, and relies on the properties of the inner product and the structure of submodules over $\mathbb{Z}_{2^m}$.
Applying \Cref{lem:dual_quotient} to the current setting, we have
\begin{align}
V_{LD}^{(m)}
\cong
(\ker H_Z \otimes_H S_m)
\Big/
(\im H_X^T \otimes_H S_m).
\end{align}

Here, we introduce the Hadamard product between a quotient module and an ordinary one as follows:
\begin{definition}[Hadamard product between quotient and submodule]
  Let $A,B,C$ be submodules of $\mathbb{Z}_{2^\nu}^n$ such that $B \subset A$ with some integer $\nu$.
  We define the Hadamard product between elements $[a]$ and $c$ of the quotient $A/B$ and the submodule $C$ as follows:
  \begin{align}
    [a] \circ c \coloneqq [a \circ c]_{B \otimes_H C},
  \end{align}
  where $[a]$ is the equivalence class of $a \in A$ in the quotient $A/B$, and $[a \circ c]_{B \otimes_H C}$ is the equivalence class of $a \circ c$ in the quotient $(A \otimes_H C) / (B \otimes_H C)$.

  Using this definition, we also define the Hadamard product between the quotient $A/B$ and the submodule $C$ as the set of all such products:
  \begin{align}
    (A/B) \otimes_H C \coloneqq \spn_{\mathbb{Z}_{2^\nu}}\{ [a] \circ c \mid a \in A,\; c \in C \}.
  \end{align}
\end{definition}

\begin{remark}
  This Hadamard-product module gives a submodule of $\mathbb{Z}_{2^\nu}^n$ and is well defined, i.e. independent of the choice of representatives of the quotient, which is shown in \Cref{prop:hadamard_product_quotient_submodule} in \Cref{app:hadamard_product_vector_space}.
\end{remark}

Using this definition, $V_{LD}^{(m)}$ admits a natural factorization with respect to the cohomology group $H^1(C;\mathbb{Z}_2)$, as follows.
\begin{theorem}[Classification of transversal logical diagonal gates]
\label{thm:transversal-classification}
  The set $V_{LD}^{(m)}$ of transversal logical diagonal gates at level $m$ is classified as
  \begin{align}
  \label{eq:transversal_classification}
  V_{LD}^{(m)} \cong H^1(C;\mathbb{Z}_2) \otimes_H S_m,
\end{align}
where
\begin{align}
  H^1(C;\mathbb{Z}_2) & \coloneqq \ker H_Z / \im H_X^T, \\
  S_m & \coloneqq \bigoplus_{\nu=1}^{m} 2^{\nu-1} (\ker H_Z)^{\otimes_H (\nu-1)}, \\
  (\ker H_Z)^{\otimes_H (\nu-1)} & \coloneqq \overbrace{\ker H_Z \otimes_H \cdots \otimes_H \ker H_Z}^{\nu-1},
\end{align}
and $\otimes_H$ denotes the Hadamard product between modules.
\end{theorem}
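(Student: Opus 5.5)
The plan is to start from the isomorphism already obtained via \Cref{lem:dual_quotient},
\begin{align}
V_{LD}^{(m)} \cong (\ker H_Z \otimes_H S_m)\big/(\im H_X^T \otimes_H S_m),
\end{align}
and identify the right-hand side with $H^1(C;\mathbb{Z}_2)\otimes_H S_m$ as defined via the quotient Hadamard product. Here $A=\ker H_Z$, $B=\im H_X^T$ and $C=S_m$, so that $B\subset A$ holds by the CSS condition \Cref{eq:css_commutativity}. The definition then says that $(A/B)\otimes_H C$ is the $\mathbb{Z}_{2^m}$-span of the classes $[a\circ c]_{B\otimes_H C}$ inside $(A\otimes_H C)/(B\otimes_H C)$. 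Well-definedness is supplied by \Cref{prop:hadamard_product_quotient_submodule}: if $a-a'\in B$, then $(a-a')\circ c\in B\otimes_H C$ by bilinearity of $\circ$.

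The key step is to show that this span is all of $(A\otimes_H C)/(B\otimes_H C)$. I would argue via the quotient map $\pi\colon A\otimes_H C\to (A\otimes_H C)/(B\otimes_H C)$, which is $\mathbb{Z}_{2^m}$-linear and surjective. By definition, $A\otimes_H C$ is generated by the products $a\circ c$. Their images $\pi(a\circ c)=[a]\circ c$ therefore generate the target, so the span of the $[a]\circ c$ is the whole quotient. Hence the identification is an equality, not merely an inclusion, and composing with the isomorphism of \Cref{lem:dual_quotient} gives \Cref{eq:transversal_classification}.

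The genuine obstacle is bookkeeping of coefficients. Elements of $\ker H_Z$ and $\im H_X^T$ are $\mathbb{Z}_2$-vectors, while $S_m\subset\mathbb{Z}_{2^m}^n$, so the Hadamard products must be read after embedding via $\mathbb{Z}_2\hookrightarrow\mathbb{Z}_{2^m}$ (lifting $0,1$ to $0,1$). I will check three things in this setting:
\begin{itemize}
\item The lift of $\ker H_Z$ is closed under the relevant spans after multiplying by $S_m$. Since every generator of $S_m$ carries its own factor $2^{\nu-1}$, the product $a\circ 2^{\nu-1}k$ depends on $a$ only modulo $2$ in the $2^{\nu-1}$-scaled sense needed.
\item The lift of $\im H_X^T$ lies inside the lift of $\ker H_Z$ after multiplication by $S_m$. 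This is where the choice of lift could break the inclusion $B\otimes_H C\subset A\otimes_H C$ required by the definition.
\item The direct-sum decomposition of $S_m$ distributes over $\otimes_H$.
\end{itemize}
I expect the first of these — showing that the choice of binary lift is immaterial once multiplied into $S_m$ — to be the only nonformal part. I would handle it by reducing to generators $a\circ 2^{\nu-1}k_z^{(1)}\circ\cdots\circ k_z^{(\nu-1)}$, in direct correspondence with \Cref{eq:logical_operator_condition} and \Cref{eq:logical_identity_condition}.
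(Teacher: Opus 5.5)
Your argument is correct and is essentially the paper's proof: the paper likewise takes the isomorphism $V_{LD}^{(m)}\cong(\ker H_Z\otimes_H S_m)/(\im H_X^T\otimes_H S_m)$ from \Cref{lem:dual_quotient}, established just before the theorem, and then reads the right-hand side as $H^1(C;\mathbb{Z}_2)\otimes_H S_m$ by definition. Your observation that the classes $[a]\circ c=\pi(a\circ c)$ generate the whole quotient makes explicit the one point the paper leaves implicit; the extra coefficient bookkeeping is unnecessary once the fixed $0/1$ embedding is used (the inclusion $\im H_X^T\otimes_H S_m\subset\ker H_Z\otimes_H S_m$ is then immediate), and the reason given in your first bullet is wrong, because the $\nu=1$ summand of $S_m$ is $\mathbb{Z}_{2^m}1_n$, so $a\circ 1_n=a$ retains all of $a$ modulo $2^m$ and the choice of lift matters (using all integer lifts would put $2\mathbb{Z}_{2^m}^n$ into both modules and collapse the quotient to $H^1(C;\mathbb{Z}_2)$).
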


\begin{proof}
  By the definition of the Hadamard product between a quotient and a submodule, we can write
  \begin{multline}
    (\ker H_Z \otimes_H S_m) \Big/ (\im H_X^T \otimes_H S_m)
    \\ \eqqcolon (\ker H_Z / \im H_X^T) \otimes_H S_m
      =  H^1(C;\mathbb{Z}_2) \otimes_H S_m
  \end{multline}
  where we have used the definition of $H^1(C;\mathbb{Z}_2)$.
\end{proof}

\begin{remark}
The identification~\Cref{eq:transversal_classification} is a classification result: it gives an isomorphism of $\mathbb{Z}_{2^m}$-modules that characterizes all possible logical diagonal actions at level $m$.
For $m=1$, this recovers
\begin{align}
V_{LD}^{(1)} = \ker H_X / \im H_Z^T = H_1(C;\mathbb{Z}_2) \cong H^1(C;\mathbb{Z}_2),
\end{align}
where we used the duality between the homology and cohomology groups of the length-$2$ chain complex reflecting the duality between logical $Z$ and $X$ operators.
Thus our result correctly reproduces the standard classification of logical $Z$ operators.
For $m \ge 2$, the contributions from $S_m$ capture higher-level phase structures arising from iterated Hadamard products of $\ker H_Z$.
\end{remark}

\begin{remark}
\label{rem:comparison_webster}
The classification obtained above is closely related to the algorithmic characterization of transversal diagonal logical operators in Ref.~\cite{Webster:2023cpy} within the XP stabilizer formalism~\cite{Webster:2022kdn}.
For level-$m$ transversal phase gates (precision $N = 2^m$ in their terminology), the defining conditions of the two frameworks are equivalent: their logical identity conditions and the conditions~\eqref{eq:logical_identity_condition} reduce to each other upon expanding sums modulo $2$ over the integers via inclusion--exclusion\footnote{Explicitly, for $x_i \in \{0,1\}$,
\begin{align}
\bigoplus_{i=1}^{t} x_i = \sum_{\emptyset \neq S \subseteq \{1,\dots,t\}} (-2)^{|S|-1} \prod_{i \in S} x_i,
\end{align}
where $\oplus$ denotes addition modulo $2$;
this identity is standard in the analysis of transversal diagonal gates, see e.g.~Eqs.~(7)--(8) of Ref.~\cite{Bravyi:2012lxn}.}, while their logical operator test characterizes the same solution set $V_L^{(m)}$ as the logicality condition~\eqref{eq:logical_operator_condition}.
Consequently, the two frameworks yield the same classification $V_{LD}^{(m)} = V_L^{(m)}/V_{LI}^{(m)}$ at each fixed level.

The two works differ in what they extract from these conditions: Ref.~\cite{Webster:2023cpy} characterizes $V_{LI}^{(m)}$ and $V_L^{(m)}$ implicitly as kernels and computes their generating sets algorithmically for each given code, whereas \Cref{thm:transversal-classification} solves the defining conditions in closed form for an arbitrary CSS code, identifying the linear-algebraic structure of the classification homologically.

Their scopes are also complementary: Ref.~\cite{Webster:2023cpy} treats logical operators built from multiqubit controlled-phase gates and extends to non-CSS stabilizer codes via embedding techniques, while the homological formulation of the present work underlies the inter-level lifting structure with obstruction maps, which will be developed in \Cref{sec:lifting_problem}.
\end{remark}

\subsection{Realization of transversal logical diagonal gates}
Although the above classification characterizes the logical action of transversal diagonal gates, it does not directly address the question of which specific $\theta \in \mathbb{Z}_{2^m}^n$ implement a given logical action $[\theta] \in V_{LD}^{(m)}$.
The isomorphism~\eqref{eq:transversal_classification} is at the level of equivalence classes and does not itself select a canonical representative.

For this purpose, it is convenient to introduce a single matrix $\widetilde{H}^{(m)}$, which organizes the logicality constraints compactly, constructed as follows.
Let $K$ be a generator matrix of $\ker H_Z$ over $\mathbb{Z}_2$, namely
\begin{align}
\ker H_Z = \spn_{\mathbb{Z}_2}\{K_1,\dots,K_r\},
\end{align}
where $K_a$ denotes the $a$-th row of $K$.
Starting from $\widetilde{H}^{(1)} \coloneqq H_X$, higher-level constraints are generated by taking componentwise products of the rows of $\widetilde{H}^{(1)}$ with those of $K$ with appropriate scaling of a power of two, and iterating this procedure.

To express this operation at the matrix level, we use the Khatri--Rao product.
For matrices
\begin{align}
A =
\begin{pmatrix}
a_1 \\ \vdots \\ a_p
\end{pmatrix},
\quad
B =
\begin{pmatrix}
b_1 \\ \vdots \\ b_q
\end{pmatrix},
\end{align}
with row vectors $a_i,b_j$, we define their row-wise Khatri--Rao product by
\begin{align}
A \odot B
\coloneqq
\begin{pmatrix}
a_1 \circ b_1 \\
a_1 \circ b_2 \\
\vdots \\
a_1 \circ b_q \\
a_2 \circ b_1 \\
\vdots \\
a_p \circ b_q
\end{pmatrix}.
\end{align}
Thus, $A \odot B$ is the matrix whose rows are all possible componentwise products of a row of $A$ with a row of $B$.
Note that this is a row-wise variant of the Khatri--Rao product; it differs from the standard column-wise definition in that rows, rather than columns, are enumerated.
In the present setting, $\widetilde{H}^{(m)} \odot K$ therefore enumerates all one-step extensions of the constraints encoded in $\widetilde{H}^{(m)}$ by one additional factor from $\ker H_Z$.

With this notation, we define $\widetilde{H}^{(m)}$ recursively by
\begin{empheq}[left=\empheqlbrace]{align}
\widetilde{H}^{(1)} &\coloneqq H_X, \\
\widetilde{H}^{(m+1)}
&\coloneqq
\begin{pmatrix}
\widetilde{H}^{(m)} \\
2\,(\widetilde{H}^{(m)} \odot K)
\end{pmatrix}.
\label{eq:Htilde_recursive_definition}
\end{empheq}
By construction, the rows of $\widetilde{H}^{(m)}$ are precisely of the form
\begin{align}
2^{\nu-1} \, h_x \circ k_z^{(1)} \circ \cdots \circ k_z^{(\nu-1)}
\quad (1 \le \nu \le m),
\end{align}
with $h_x$ running over the rows of $H_X$ and each $k_z^{(a)}$ running over the rows of $K$.
Therefore, the logicality condition~\Cref{eq:logical_operator_condition} is equivalent to the single matrix congruence
\begin{align}
\widetilde{H}^{(m)} \theta = 0 \pmod{2^m}.
\label{eq:logical_condition_Htilde}
\end{align}

In this way, $\widetilde{H}^{(m)}$ provides a compact encoding of all logicality constraints up to level $m$.
It is generally highly redundant, because many rows produced by the recursive construction coincide or are linearly dependent modulo powers of two, but this redundancy is harmless for our purposes.

To classify them up to logical identities, we need to identify the submodule $V_{LI}^{(m)}$ of logically trivial operators.
The logical identity condition~\Cref{eq:logical_identity_condition} can also be expressed in terms of a Khatri--Rao construction, by replacing $H_X$ with the generator matrix $K$ of $\ker H_Z$.
We define $\widetilde{K}^{(m)}$ recursively by
\begin{empheq}[left=\empheqlbrace]{align}
\widetilde{K}^{(1)} &\coloneqq K, \\
\widetilde{K}^{(m+1)}
&\coloneqq
\begin{pmatrix}\widetilde{K}^{(m)} \\
2\,(\widetilde{K}^{(m)} \odot K)
\end{pmatrix}.
\end{empheq}
Then the logical identity condition is equivalent to
\begin{align}
  \label{eq:logical_identity_condition_Ktilde}
\widetilde{K}^{(m)} \theta = 0 \pmod{2^m}.
\end{align}

Using these matrices, we can obtain a concrete description of the quotient $V_{LD}^{(m)} = V_L^{(m)} / V_{LI}^{(m)}$.
One subtlety must be kept in mind: because $\mathbb{Z}_{2^m}$ is not a field for $m \geq 2$, the submodules $V_L^{(m)}$ and $V_{LI}^{(m)}$ of $\mathbb{Z}_{2^m}^n$ need not be free, and hence do not admit a basis in the strict sense.
Nevertheless, any submodule $V \subset \mathbb{Z}_{2^m}^n$ is a finite abelian $2$-group, so the structure theorem of finite abelian groups yields a decomposition into cyclic submodules,
\begin{align}
  \label{eq:cyclic_decomposition}
V = \bigoplus_{i=1}^{d} \mathbb{Z}_{2^m} e_i,
\qquad
\mathbb{Z}_{2^m} e_i \cong \mathbb{Z}_{2^{m_i}},
\end{align}
where each generator $e_i$ has order $2^{m_i}$ with $1 \leq m_i \leq m$.
We call such a set $\{e_i\}$ an \emph{adapted generating set} of $V$; every element of $V$ is then written uniquely as $\sum_i \lambda_i e_i$ with $\lambda_i \in \mathbb{Z}_{2^{m_i}}$.
For kernels such as \Cref{eq:logical_condition_Htilde,eq:logical_identity_condition_Ktilde}, an adapted generating set and the orders of its generators are computable via the Smith normal form of an integer lift of the defining matrix, or equivalently via the Howell normal form over $\mathbb{Z}_{2^m}$~\cite{Webster:2023cpy}.

As a first step, we solve \Cref{eq:logical_condition_Htilde} for $\theta \in \mathbb{Z}_{2^m}^n$, and obtain a general solution
\begin{align}
\theta = \sum_{i=1}^{d^{(m)}_{L}} \lambda_i e_i,
\end{align}
where $\{e_i\}$ is an adapted generating set of $V_L^{(m)}$, $d^{(m)}_{L}$ denotes the number of its generators, and $\lambda_i \in \mathbb{Z}_{2^{m_i}}$ with $2^{m_i}$ the order of $e_i$.

Next, we identify the submodule $V_{LI}^{(m)}$ of logically trivial operators as the solution to \Cref{eq:logical_identity_condition_Ktilde} as
\begin{align}
  \label{eq:logical_identity_solution}
\theta = \sum_{j=1}^{d^{(m)}_{LI}} \alpha_j g_j,
\end{align}
where $\{g_j\}$ is an adapted generating set of $V_{LI}^{(m)}$ and $d^{(m)}_{LI}$ denotes the number of its generators.
Here, we let each $\alpha_j$ range over all of $\mathbb{Z}_{2^m}$; this parametrization of $V_{LI}^{(m)}$ is redundant but surjective, which suffices for all subsequent arguments.

Finally, the quotient $V_{LD}^{(m)} = V_L^{(m)}/V_{LI}^{(m)}$ is again a finite $\mathbb{Z}_{2^m}$-module generated by the images $[e_i]$ of the generators of $V_L^{(m)}$.
An adapted generating set of the quotient, together with the orders of its generators, follows from a further Smith-normal-form computation applied to the relation matrix expressing $\{g_j\}$ and $\{2^{m_i} e_i\}$ in terms of $\{e_i\}$.
In particular, the number of distinct logical actions realized by level-$m$ transversal diagonal gates is of the order $|V_{LD}^{(m)}|$, given by the product of the orders of these generators.

Each generator of the cyclic decomposition of $V_{LD}^{(m)}$ provides an independent logical transversal diagonal gate, and all transversal operators $U(\theta'/2^{m-1})$ equivalent to $U(\theta/2^{m-1})$ on the codespace are characterized by the vector
\begin{align}
  \theta' = \theta + \widetilde{G}^{(m)} \alpha,
\end{align}
with
\begin{align}
  \widetilde{G}^{(m)} \coloneqq
\begin{pmatrix}
 g_1 & g_2 & \cdots & g_{d^{(m)}_{LI}}
\end{pmatrix}
,
\end{align}
where $g_j$ and $\alpha$ are given in \Cref{eq:logical_identity_solution}.

In the next section, we consider refinement of the rotation angle based on this classification, namely whether a logical diagonal gate at level $m$ admits a transversal realization at level $m+1$.

\section{Lifting problem and obstruction maps}
\label{sec:lifting_problem}

\subsection{Lifting problem}

In the previous section, we obtained a homological classification of transversal logical diagonal gates at level $m$ in terms of the first cohomology group $H^1(C;\mathbb{Z}_2)$ and the auxiliary space $S_m$.
Thus, at each fixed level $m$, the logical diagonal gates by transversal rotations are described by homology with coefficients in $\mathbb{Z}_{2^m}$.

However, this classification does not answer a more subtle question: given a logical diagonal gate implementable at angle $\pi/2^{m-1}$, can it always be refined to the finer angle $\pi/2^{m}$?

This question is not answered by the fixed-level classification, and the answer is in general no.
Even if $U(\theta/2^{m-1})$ is a valid logical operator at level $m$, the na\"ive candidate $U(\theta/2^m)$ need not be logical at level $m+1$, because the logicality condition~\Cref{eq:logical_operator_condition} at level $m+1$ imposes strictly stronger constraints on $\theta$ than those at level $m$.
Therefore, refining the rotation angle requires finding a possibly different representative $\hat{\theta}$ of the same logical class, satisfying the additional constraints at level $m+1$.

More generally, we formalize this question as the following lifting problem.
\begin{definition}[Lifting problem]
Let $\theta \in V_L^{(m)} \subset \mathbb{Z}_{2^m}^n$ give a transversal logical diagonal gate $U(\theta/2^{m-1})$.
We ask whether there exists $\hat{\theta} \in \mathbb{Z}_{2^{m+1}}^n$ that satisfies
\begin{enumerate}[label=(\roman*)]
\item $U(\hat{\theta}/2^{m-1})$ implements the same logical operator as $U(\theta/2^{m-1})$, and
\item $U(\hat{\theta}/2^m)$ is a valid logical operator.
\end{enumerate}
In particular, for a given $\theta \in V_L^{(m)} \subset \mathbb{Z}_{2^m}^n$, we call this the ``level-$m$'' lifting problem for $\theta$, and we call any solution $\hat{\theta}$ to it a ``lift'' of $\theta$ from level $m$ to $m+1$ throughout this paper.
\end{definition}

The first condition ensures that $U(\hat{\theta}/2^m)$ gives a square root of $U(\theta/2^{m-1})$ on the codespace, while the second condition ensures that $U(\hat{\theta}/2^{m})$ is itself logical.
Note that $\hat{\theta}$ is not necessarily the same as $\theta$ as an element of $\mathbb{Z}_{2^{m+1}}^n$, because all we require for logical square roots is the equality of logical actions at level $m$.

The lifting problem asks whether the logical action represented by $\theta$ at level $m$ admits a representative at level $m+1$ whose transversal rotation angle is finer by a factor of two.
As we show below, the obstruction to such a lift can be captured by certain maps defined on the space of logical operators at level $m$, which we call obstruction maps.

\subsection{Constraints of lifts and obstruction maps}

We now analyze the lifting problem introduced in the previous subsection more concretely.
As derived in \Cref{sec:classification} (see \Cref{eq:logical_operator_condition}), the logicality condition for $U(\theta/2^m)$ at level $m$ requires $\theta$ to lie in the kernel of the matrix $\widetilde{H}^{(m)}$ modulo $2^m$.

For a lift $\hat{\theta} \in \mathbb{Z}_{2^{m+1}}^n$, the two lifting conditions (i) and (ii) can be reformulated as follows.
First, the requirement that $\hat{\theta}$ induces the same logical operator as $\theta$ at level $m$ is equivalent to
\begin{align}
\label{eq:first_lift_condition}
\hat{\theta}
=
\theta + \widetilde{G}^{(m)} \alpha + 2^m \omega
\pmod{2^{m+1}},
\end{align}
for some $\alpha \in \mathbb{Z}_{2^m}^{d^{(m)}_{LI}}$ and $\omega \in \mathbb{Z}_2^n$.
Second, the requirement that $U(\hat{\theta}/2^m)$ is logical at level $m+1$ is equivalent to
\begin{align}
\label{eq:second_lift_condition}
\widetilde{H}^{(m+1)} \hat{\theta} = 0 \pmod{2^{m+1}}.
\end{align}

Using the recursive structure (\Cref{eq:Htilde_recursive_definition}) of $\widetilde{H}^{(m)}$ and $\widetilde{H}^{(m+1)}$, and substituting~\Cref{eq:first_lift_condition} into~\Cref{eq:second_lift_condition}, we obtain two independent conditions as follows:
\begin{empheq}[left=\empheqlbrace]{align}
\label{eq:lift_condition_split_1}
& (\widetilde{H}^{(m)} \odot K) \bigl( \theta + \widetilde{G}^{(m)} \alpha \bigr) = 0 \pmod{2^m},
\\
\label{eq:lift_condition_split_2}
& \widetilde{H}^{(m)} \bigl( \theta + \widetilde{G}^{(m)} \alpha \bigr) = 2^m \widetilde{H}^{(m)} \omega \pmod{2^{m+1}}.
\end{empheq}
The first equation is a consistency condition that must be satisfied already at level $m$, while the second equation determines whether the lift extends to level $m+1$.
These two equations give rise to two independent obstruction maps: the first one $\beta_1^{(m)}$ measuring compatibility at the current level, and the second one $\beta_2^{(m)}$ measuring the obstruction to coefficient extension from $\mathbb{Z}_{2^m}$ to $\mathbb{Z}_{2^{m+1}}$.

We first analyze~\Cref{eq:lift_condition_split_1}.
Rewriting it, we obtain
\begin{align}
  \label{eq:first_obstruction_equation}
(\widetilde{H}^{(m)} \odot K)\theta
\in
\im\left(
(\widetilde{H}^{(m)} \odot K) \widetilde{G}^{(m)}
\right).
\end{align}
Based on this observation, we can define the following \emph{first obstruction} $\beta_1^{(m)}$ that measures whether $(\widetilde{H}^{(m)} \odot K)\theta$ can be cancelled by shifting $\theta$ without changing its logical class.

\begin{definition}[First obstruction map]
\label{def:beta1}
Define the first obstruction map $\beta_1^{(m)} : V_{LD}^{(m)} \to T_1^{(m)}$ by
\begin{align}
\beta_1^{(m)}([\theta])
\coloneqq
\left[
(\widetilde{H}^{(m)} \odot K)\theta
\right]
\in T_1^{(m)},
\end{align}
where $T_1^{(m)}$ is a quotient module defined as $\mathbb{Z}_{2^m}^{(n-m_z)r_m}$ divided by the right-hand side of \Cref{eq:first_obstruction_equation}, with $r_m$ the number of rows of $\widetilde{H}^{(m)}$.
(The precise definition of $T_1^{(m)}$ is given in \Cref{app:definition_of_obstruction_maps}.)
\end{definition}

\begin{remark}
$\beta_1^{(m)}$ is well defined as a map on the quotient $V_{LD}^{(m)}$, that is, independent of the choice of representative $\theta$ in $V_{LD}^{(m)}$, as verified in \Cref{app:definition_of_obstruction_maps}.
\end{remark}

\begin{remark}
By construction, $\beta_1^{(m)}([\theta])=0$ is a necessary condition for the solvability of~\Cref{eq:lift_condition_split_1}.
Hence, this first obstruction map captures the first hurdle in the lifting problem: whether the logical operator at level $m$ can be adjusted within its logical class to satisfy the additional constraints at level $m+1$.
\end{remark}

Next, we analyze the second equation, \Cref{eq:lift_condition_split_2}, with the assumption that $\beta_1^{(m)}([\theta])=0$.
Then there exists $\alpha_0$ such that
\begin{align}
(\widetilde{H}^{(m)} \odot K)
\bigl(
\theta + \widetilde{G}^{(m)} \alpha_0
\bigr)
= 0
\pmod{2^m}.
\end{align}
The general solution is $\alpha=\alpha_0+\eta$, where
\begin{align}
\eta \in \ker\left((\widetilde{H}^{(m)} \odot K) \widetilde{G}^{(m)}\right).
\end{align}
Because $\theta, \widetilde{G}^{(m)}\alpha_0 \in V_L^{(m)}$ for any $\alpha_0$, they satisfy $\widetilde{H}^{(m)}\theta = 0 \pmod{2^m}$ and $\widetilde{H}^{(m)}\widetilde{G}^{(m)} = 0 \pmod{2^m}$.
Then, we find that all terms are divisible by $2^m$ in the second equation~\Cref{eq:lift_condition_split_2}.
Dividing it by $2^m$ modulo $2$, we obtain
\begin{multline}
\label{eq:second_obstruction_equation}
\frac{1}{2^m}
\widetilde{H}^{(m)}
\bigl(
\theta + \widetilde{G}^{(m)} \alpha_0
\bigr)
\\ = \frac{1}{2^m} \widetilde{H}^{(m)} \widetilde{G}^{(m)} \eta + \widetilde{H}^{(m)} \omega \pmod{2}.
\end{multline}

The right-hand side lies in
\begin{align}
\im\left(
\eval{\frac{\widetilde{H}^{(m)} \widetilde{G}^{(m)}}{2^m}}
_{\ker((\widetilde{H}^{(m)} \odot K) \widetilde{G}^{(m)})}
\right)
+
\im \widetilde{H}^{(m)}.
\end{align}
The \emph{second obstruction} $\beta_2^{(m)}$ measures the failure of the $2^m$-divisible remainder $\widetilde{H}^{(m)}(\theta+\widetilde{G}^{(m)}\alpha_0)/2^m$ to lie in the image of $\widetilde{H}^{(m)}$ with some correction term.

\begin{definition}[Second obstruction map]
\label{def:beta2}
Suppose $\beta_1^{(m)}([\theta])=0$, and let $\alpha_0$ be any solution to~\Cref{eq:lift_condition_split_1}.
Define the second obstruction map $\beta_2^{(m)} : \ker \beta_1^{(m)} \to T_2^{(m)}$ by
\begin{align}
\beta_2^{(m)}([\theta])
\coloneqq
\left[
\frac{1}{2^m}
\widetilde{H}^{(m)}
\bigl(
\theta + \widetilde{G}^{(m)} \alpha_0
\bigr)
\right]
\in T_2^{(m)},
\end{align}
where $T_2^{(m)}$ is a quotient module defined as $\mathbb{Z}^{r_m}_2$ divided by the right-hand side of \Cref{eq:second_obstruction_equation} and $r_m$ is the number of rows of $\widetilde{H}^{(m)}$.
(The precise definition of $T_2^{(m)}$ is given in \Cref{app:definition_of_obstruction_maps}.)
\end{definition}

\begin{remark}
  The well definedness of $\beta_2^{(m)}$ (independence of the choice of $\alpha_0$ and of the representative $\theta$) is verified in \Cref{app:definition_of_obstruction_maps}.
\end{remark}

\begin{remark}
  By construction, $\beta_2^{(m)}([\theta])=0$ is necessary for the solvability of~\Cref{eq:lift_condition_split_2}.
  Hence, it measures the failure of $\theta$ to satisfy the extension condition~\Cref{eq:lift_condition_split_2} after solving~\Cref{eq:lift_condition_split_1}.
\end{remark}

In the next subsection, we prove that the vanishing of both $\beta_1^{(m)}([\theta])$ and $\beta_2^{(m)}([\theta])$ is also sufficient for the existence of a lift, thus giving a complete characterization of the existence of a solution to the lifting problem.

\subsection{Lifting criterion and transversal implementability of finer rotations}

Combining the constraints from the two obstruction maps derived in the previous subsection, we obtain a necessary and sufficient condition for the existence of lifts, and thus for the existence of transversal implementations of logical diagonal gates with finer rotation angles.

\begin{theorem}[Obstructions to existence of lifts]
\label{thm:lifting-criterion}
A lift of $\theta \in V_L^{(m)}$ from level $m$ to level $m+1$ exists if and only if both obstruction maps vanish:
\begin{align}
\beta_1^{(m)}([\theta]) = 0,
\quad
\beta_2^{(m)}([\theta]) = 0.
\end{align}
\end{theorem}

\begin{proof}
The necessity follows directly from the construction of the obstruction maps in the previous subsection.

For sufficiency, suppose that both obstruction maps vanish.
Then there exist $\alpha_0$ and $\omega$ satisfying the equations
\begin{align}
  \label{eq:def_alpha0}
  (\widetilde{H}^{(m)} \odot K) (\theta + \widetilde{G}^{(m)} \alpha_0) = 0 \pmod{2^m},
\end{align}
  and 
\begin{multline}
  \label{eq:def_eta_omega}
  \widetilde{H}^{(m)} (\theta + \widetilde{G}^{(m)} (\alpha_0 + \eta)+ 2^m \omega) \\
    = 0 \pmod{2^{m+1}},
\end{multline}
  with $\eta$ satisfying
\begin{align}
  \label{eq:def_eta}
  (\widetilde{H}^{(m)} \odot K) \widetilde{G}^{(m)} \eta & = 0 \pmod{2^m}.
\end{align}
Defining $\hat{\theta}$ as 
\begin{align}
  \hat{\theta}
  \coloneqq
  \theta + \widetilde{G}^{(m)} (\alpha_0 + \eta) + 2^m \omega, \pmod{2^{m+1}}
\end{align}
we can verify that $\hat{\theta}$ satisfies the two lifting conditions.

\textit{Step 1 (compatibility).}
First, we check that $\hat{\theta}$ induces the same logical operator as $\theta$ at level $m$.
This $\hat{\theta}$ is calculated modulo $2^m$ as
\begin{align}
\hat{\theta} = \theta + \widetilde{G}^{(m)} (\alpha_0 + \eta) \pmod{2^m}.
\end{align}
Because $\alpha_0 + \eta \eqqcolon \alpha$ is an element of $\mathbb{Z}_{2^m}^{d_{LI}^{(m)}}$, we can rewrite $\hat{\theta}$ as
\begin{align}
\hat{\theta} = \theta + \widetilde{G}^{(m)} \alpha \pmod{2^m},
\end{align}
so the first lifting condition is satisfied.

\textit{Step 2 (extension).}
Next, we check that $U(\hat{\theta}/2^m)$ is a valid logical operator.
$\widetilde{H}^{(m+1)} \hat{\theta}$ is calculated modulo $2^{m+1}$ as
\begin{align}
\widetilde{H}^{(m+1)} \hat{\theta}
&=
\begin{pmatrix}
\widetilde{H}^{(m)} (\theta + \widetilde{G}^{(m)} (\alpha_0 + \eta) + 2^m \omega) \\
2\, (\widetilde{H}^{(m)} \odot K) (\theta + \widetilde{G}^{(m)} (\alpha_0 + \eta))
\end{pmatrix}
\\ & = 0 \pmod{2^{m+1}}.
\end{align}
In the second line, the first row vanishes by~\Cref{eq:def_eta_omega}.
For the second row, all the terms vanish owing to~\Cref{eq:def_alpha0} and \Cref{eq:def_eta}.

Hence, $\hat{\theta}$ satisfies both lifting conditions and gives a valid lift.
\end{proof}

Iterating \Cref{thm:lifting-criterion} yields a necessary and sufficient condition for the existence of multi-level lifts.

\begin{corollary}[Multilevel lifting]
\label{cor:multi_level_lifting}
A lift of $\theta \in V_L^{(1)} = \ker H_X$ from level $1$ to $m$ exists if and only if there exists a sequence $\{\theta^{(\nu)}\}_{\nu=1}^{m}$ such that
\begin{align}
\theta^{(1)} = \theta,
\quad
\theta^{(\nu)} \in V_L^{(\nu)},
\end{align}
and for every $\nu = 1,\dots,m-1$,
\begin{align}
\beta_1^{(\nu)}([\theta^{(\nu)}]) = 0,
\quad
\beta_2^{(\nu)}([\theta^{(\nu)}]) = 0.
\end{align}
\end{corollary}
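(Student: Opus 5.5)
The plan is an induction on $m$, with \Cref{thm:lifting-criterion} applied at each single step $\nu\to\nu+1$. First I would fix the meaning of a multi-level lift from level $1$ to $m$. It is a sequence $\theta^{(1)}=\theta,\theta^{(2)},\dots,\theta^{(m)}$ with $\theta^{(\nu)}\in V_L^{(\nu)}$ such that, for each $\nu$, $\theta^{(\nu+1)}$ is a lift of $\theta^{(\nu)}$ in the sense of the lifting problem. In particular, condition (i) makes $U(\theta^{(\nu+1)}/2^{\nu-1})$ and $U(\theta^{(\nu)}/2^{\nu-1})$ logically equal, and condition (ii) makes $U(\theta^{(\nu+1)}/2^{\nu})$ logical, so that $\theta^{(\nu+1)}\in V_L^{(\nu+1)}$. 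By induction, $U(\theta^{(m)}/2^{m-1})$ squares (after $m-1$ iterations) to the logical action of $U(\theta)$, which is the intended meaning of the corollary.

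For necessity, I would take a multi-level lift as above. Each consecutive pair $(\theta^{(\nu)},\theta^{(\nu+1)})$ is then a solution of the level-$\nu$ lifting problem for $\theta^{(\nu)}$. By the necessity direction of \Cref{thm:lifting-criterion}, this gives $\beta_1^{(\nu)}([\theta^{(\nu)}])=0$ and $\beta_2^{(\nu)}([\theta^{(\nu)}])=0$ for $\nu=1,\dots,m-1$. The membership $\theta^{(\nu)}\in V_L^{(\nu)}$ is part of the data, and $\theta^{(1)}=\theta$ by definition.

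For sufficiency, the literal statement only asks for some sequence $\theta^{(\nu)}\in V_L^{(\nu)}$ with vanishing obstructions. This alone does not tie $\theta^{(\nu+1)}$ to $\theta^{(\nu)}$, so I would build the lifts inductively and read the corollary as quantifying over sequences in which each $\theta^{(\nu+1)}$ represents the lift produced at the previous step. Concretely, suppose we already have a chain of lifts up to $\theta^{(\nu)}$ and that $\beta_1^{(\nu)}([\theta^{(\nu)}])=\beta_2^{(\nu)}([\theta^{(\nu)}])=0$. The sufficiency direction of \Cref{thm:lifting-criterion} gives an explicit $\hat\theta=\theta^{(\nu)}+\widetilde{G}^{(\nu)}(\alpha_0+\eta)+2^\nu\omega$ solving the level-$\nu$ problem. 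We set $\theta^{(\nu+1)}\coloneqq\hat\theta$; it lies in $V_L^{(\nu+1)}$ by Step 2 of that proof.

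Well-definedness of $\beta_i^{(\nu)}$ on classes in $V_{LD}^{(\nu)}$ (\Cref{app:definition_of_obstruction_maps}) ensures the hypothesis depends only on $[\theta^{(\nu)}]$, so any representative chosen in the sequence may be used. This is the one point needing care: I would check that replacing $\theta^{(\nu+1)}$ by another representative of the same class in $V_{LD}^{(\nu+1)}$ preserves condition (i) relative to $\theta^{(\nu)}$. This holds because $V_{LI}^{(\nu+1)}$ reduced mod $2^\nu$ lies in $V_{LI}^{(\nu)}$, by comparing \Cref{eq:logical_identity_condition} at consecutive levels. The main obstacle is precisely this bookkeeping of how the sequence is linked, rather than any new algebra.
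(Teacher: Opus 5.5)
Your proof is correct, and for sufficiency it is the paper's argument (recursive use of \Cref{thm:lifting-criterion}). The real difference is what you take ``a lift from level $1$ to $m$'' to mean.

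You define it as a chain of one-step lifts, so necessity is just the necessity half of \Cref{thm:lifting-criterion} applied link by link. The paper instead starts from a single vector $\hat\theta\in V_L^{(m)}$, which is the object that actually appears as a transversal gate $U(\hat\theta/2^{m-1})$ in \Cref{cor:transversal_implementability}. It then builds the chain as $\theta^{(\nu)}\coloneqq\hat\theta \bmod 2^\nu$. Because $\widetilde{H}^{(\nu)}$ is the top block of $\widetilde{H}^{(m)}$, each reduction lies in $V_L^{(\nu)}$. Consecutive reductions agree modulo $2^\nu$, so condition (i) holds exactly. Your squaring remark shows chain $\Rightarrow$ single vector, but for necessity under the single-vector notion you also need this reduction step.

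In exchange, your route makes explicit what the paper's one-line sufficiency argument silently assumes: the sequence must be linked, with each $\theta^{(\nu+1)}$ a lift of $\theta^{(\nu)}$. Linking only up to $V_{LI}^{(\nu+1)}$ is harmless, by your observation that $V_{LI}^{(\nu+1)}$ reduces into $V_{LI}^{(\nu)}$. This is not pedantry: read literally, the statement is false. For the Steane code with $m=3$, the sequence $\theta^{(1)}=1_7$, $\theta^{(2)}=0$, $\theta^{(3)}=0$ satisfies every stated hypothesis, since $\beta^{(1)}([1_7])=0$ and $[0]$ trivially has vanishing obstructions. Yet by \Cref{cor:steane} no lift of $1_7$ to level $3$ exists.
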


\begin{proof}
The sufficiency follows by recursively applying \Cref{thm:lifting-criterion}.

For necessity, let $\hat{\theta}$ be a level-$m$ lift of $\theta$, and define
\begin{align}
\theta^{(\nu)} \coloneqq \hat{\theta} \pmod{2^\nu}.
\end{align}
Because $\hat{\theta} \in V_L^{(m)}$, it satisfies the logicality conditions at all intermediate levels: for each $\nu \le m$, the constraints of $\widetilde{H}^{(\nu)}$ appear as a subblock of $\widetilde{H}^{(m)}$, so $\widetilde{H}^{(\nu)} \hat{\theta} = 0 \pmod{2^m}$.
Reducing modulo $2^\nu$ gives $\theta^{(\nu)} = \hat{\theta} \pmod{2^\nu}$, and therefore
\begin{align}
\widetilde{H}^{(\nu)} \theta^{(\nu)} = \widetilde{H}^{(\nu)} \hat{\theta} = 0 \pmod{2^\nu},
\end{align}
yielding $\theta^{(\nu)} \in V_L^{(\nu)}$.
Each $\theta^{(\nu+1)}$ therefore qualifies as a lift of $\theta^{(\nu)}$ from level $\nu$ to $\nu+1$, implying the vanishing of the obstruction maps.
\end{proof}

Although the above results characterize the existence of lifts, we must additionally control the pairing with a basis of the logical states to realize a specific logical Pauli $Z$ rotation.
Let $\{[\gamma_b]\}$ be a basis of $H^1(C;\mathbb{Z}_2)$ and $\gamma_b$ be a representative of $[\gamma_b]$ in $\ker H_Z$.
Then, let us assume that $\theta \in V^{(1)}_L$ and its lift $\hat{\theta} \in V^{(m)}_L$ to level $m$ satisfy
\begin{align}
  \inp{\theta}{\gamma_b} &= \delta_{ab} \pmod{2}, \\
  \inp{\hat{\theta}}{\gamma_b} &= \delta_{ab} \pmod{2^m}.
\end{align}
The transversal rotation $U(\hat{\theta}/2^{m-1})$ acts on the logical state $\ket{[\gamma_b]}$ as
\begin{align}
  U(\hat{\theta}/2^{m-1})\ket{[\gamma_b]} = \exp\left( \frac{i\pi}{2^{m-1}} \delta_{ab}\right) \ket{[\gamma_b]},
\end{align}
which agrees with the $\pi/2^{m-1}$ rotation of the logical Pauli $Z$ operator $\overline{Z}(\theta)$.

The following proposition shows that this pairing can always be preserved under lifting.

\begin{proposition}[Inner-product preservation]
\label{prop:inner_product_preservation}
Let $\{[\gamma_i]\}$ be a basis of $H^1(C;\mathbb{Z}_2)$ and suppose that $\theta \in V_L^{(m)}$ satisfies
\begin{align}
\inp{\theta}{\gamma_i} = c_i \pmod{2^m}
\end{align}
with some $c_i \in \mathbb{Z}_{2^m}$ for all $i$.
If $\theta$ admits a lift to level $m+1$, then there exists another lift $\hat{\theta}$ such that
\begin{align}
\inp{\hat{\theta}}{\gamma_i} = c_i \pmod{2^{m+1}}
\end{align}
for all $i$.
\end{proposition}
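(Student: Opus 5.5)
The plan is to start from an arbitrary lift $\hat{\theta}'$ of $\theta$, which exists by hypothesis. I will then correct it by a term $2^m\omega'$ with $\omega' \in \mathbb{Z}_2^n$ chosen so as to fix the top bit of each pairing. Such a correction is invisible modulo $2^m$, so lifting condition (i) is automatically preserved. The whole argument then reduces to two checks: that the correction does not destroy logicality at level $m+1$, and that the required top bits can be prescribed freely.

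\textit{Step 1 (pairings of an arbitrary lift).} By \Cref{eq:first_lift_condition}, any lift has the form $\hat{\theta}' = \theta + \widetilde{G}^{(m)}\alpha + 2^m\omega \pmod{2^{m+1}}$. Each column $g_j$ of $\widetilde{G}^{(m)}$ lies in $V_{LI}^{(m)}$. The $\nu=1$ part of \Cref{eq:logical_identity_condition} therefore gives $\inp{g_j}{k_z} = 0 \pmod{2^m}$ for every $k_z \in \ker H_Z$, and in particular for every $\gamma_i$. Hence $\inp{\hat{\theta}'}{\gamma_i} = c_i + 2^m d_i \pmod{2^{m+1}}$ for some bits $d_i \in \mathbb{Z}_2$, where $c_i$ is now read through a fixed integer representative. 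The task is to cancel the $d_i$.

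\textit{Step 2 (admissible corrections).} Set $\hat{\theta} \coloneqq \hat{\theta}' - 2^m\omega'$ and ask when this is still a lift. Condition (i) is unaffected, since $\hat{\theta} \equiv \hat{\theta}' \pmod{2^m}$. For condition (ii), I use the recursive form \Cref{eq:Htilde_recursive_definition} of $\widetilde{H}^{(m+1)}$:
\begin{itemize}
\item The block $2(\widetilde{H}^{(m)}\odot K)$ applied to $2^m\omega'$ is divisible by $2^{m+1}$, so it imposes nothing.
\item In the block $\widetilde{H}^{(m)}$, every row of level $\nu \ge 2$ carries a factor $2^{\nu-1}\ge 2$, so it also imposes nothing.
\item Only the rows $H_X$ remain, and they impose $2^m H_X\omega' \equiv 0 \pmod{2^{m+1}}$, that is, $\omega' \in \ker H_X$ over $\mathbb{Z}_2$.
\end{itemize}
The new pairings are $\inp{\hat{\theta}}{\gamma_i} = c_i + 2^m(d_i - \inp{\omega'}{\gamma_i}) \pmod{2^{m+1}}$. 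It therefore suffices to find $\omega' \in \ker H_X$ with $\inp{\omega'}{\gamma_i} = d_i \pmod 2$ for all $i$.

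\textit{Step 3 (main point: nondegeneracy of the pairing).} This is the expected obstacle, although it is standard linear algebra over the field $\mathbb{Z}_2$. I need the map $\ker H_X \to \mathbb{Z}_2^{k}$, $\omega' \mapsto (\inp{\omega'}{\gamma_i})_i$, to be surjective. It is well defined on the quotient side because $\ker H_X = (\im H_X^T)^\perp$ pairs trivially with $\im H_X^T$.
\begin{itemize}
\item Suppose a linear combination $\sum_i x_i\gamma_i$ is orthogonal to all of $\ker H_X$.
\item Then it lies in $(\ker H_X)^\perp = ((\im H_X^T)^\perp)^\perp = \im H_X^T$.
\item Since $\{[\gamma_i]\}$ is a basis of $H^1(C;\mathbb{Z}_2)$, this forces $x_i = 0$ for all $i$.
\end{itemize}
So the induced pairing $H_1(C;\mathbb{Z}_2)\times H^1(C;\mathbb{Z}_2)\to\mathbb{Z}_2$ is nondegenerate, and the linear functionals $\inp{\cdot}{\gamma_i}$ restricted to $\ker H_X$ are linearly independent. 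Hence the map is surjective (equivalently, one may invoke \Cref{lem:dual_quotient} with $m=1$). Choosing such an $\omega'$ yields a lift $\hat{\theta}$ with $\inp{\hat{\theta}}{\gamma_i} = c_i \pmod{2^{m+1}}$ for all $i$, which completes the plan.
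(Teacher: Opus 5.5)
Your proposal is correct and follows essentially the same route as the paper: take an arbitrary lift, note that the $V_{LI}^{(m)}$ contribution pairs to zero modulo $2^m$ with each $\gamma_i$, then fix the top bits by adding $2^m\omega'$ with $\omega'\in\ker H_X$ (the paper takes $\omega'=\sum_i a_i\tilde{\gamma}_i$ for a dual family $\{\tilde{\gamma}_i\}\subset\ker H_X$). Your version is slightly more careful than the paper's: you check every block of $\widetilde{H}^{(m+1)}$, whereas the paper only mentions $\widetilde{H}^{(m)}$, and you prove the needed surjectivity via $(\ker H_X)^\perp=\im H_X^T$ instead of just citing the duality between $H_1(C;\mathbb{Z}_2)$ and $H^1(C;\mathbb{Z}_2)$.
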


\begin{proof}
Let $\hat{\theta}$ be any lift of $\theta$, which can be written as
\begin{align}
\hat{\theta} = \theta + \widetilde{G}^{(m)} \alpha + 2^m \omega
\end{align}
with some $\alpha \in \mathbb{Z}_{2^m}^{d_{LI}^{(m)}}$ and $\omega \in \mathbb{Z}_2^n$.
Then the pairing of $\hat{\theta}$ with $\gamma_i$ is calculated as
\begin{multline}
\inp{\hat{\theta}}{\gamma_i} = \inp{\theta}{\gamma_i} + \inp{\widetilde{G}^{(m)} \alpha}{\gamma_i}
\\ + \inp{2^m \omega}{\gamma_i}
\pmod{2^{m+1}}.
\end{multline}

The second term can be written as
\begin{align}
\inp{\widetilde{G}^{(m)} \alpha}{\gamma_i}
= \sum_{j=1}^n (\gamma_i)_j (\widetilde{G}^{(m)} \alpha)_j.
\end{align}
The divisibility by $2^m$ for all $i$ follows from the assumption $\widetilde{G}^{(m)} \alpha \in V_{LI}^{(m)}$ and the first block of the logical identity condition~\Cref{eq:logical_identity_condition_Ktilde}.
So we can write the pairing modulo $2^{m+1}$ as
\begin{align}
\inp{\hat{\theta}}{\gamma_i}
=
c_i + 2^m a_i \pmod{2^{m+1}}
\end{align}
for some $a_i \in \mathbb{Z}_2$.

Then, we can define a new lift $\hat{\theta}'$ by shifting $\hat{\theta}$ as
\begin{align}
\hat{\theta}'
\coloneqq
\hat{\theta} + 2^m \sum_i a_i \tilde{\gamma}_i,
\end{align}
where $\{[\tilde{\gamma}_i]\}$ is a basis of $H_1(C;\mathbb{Z}_2)$ with representatives $\tilde{\gamma}_i \in \ker H_X$ satisfying
\begin{align}
\inp{\tilde{\gamma}_i}{\gamma_j} = \delta_{ij} \pmod{2}.
\end{align}
The existence of such a basis is guaranteed by the duality between the logical $Z$ and $X$ operators, reflecting the nondegeneracy of the pairing between $H_1(C;\mathbb{Z}_2)$ and $H^1(C;\mathbb{Z}_2)$.

Because $\tilde{\gamma}_i \in \ker H_X$, it can be shown that 
\begin{align}
\widetilde{H}^{(m)}(2^m \tilde{\gamma}_i)=0 \pmod{2^{m+1}}
\end{align}
by the recursive structure of $\widetilde{H}^{(m)}$.
So $\hat{\theta}'$ satisfies the logicality condition and remains a valid lift.

In addition, the pairing of $\hat{\theta}'$ with $\gamma_i$ is calculated as
\begin{align}
\inp{\hat{\theta}'}{\gamma_i} &= c_i + 2\cdot 2^m a_i 
\\ &= c_i \pmod{2^{m+1}}
\end{align}
for all $i$.
Thus, we have constructed a lift $\hat{\theta}'$ that preserves the pairing with the basis $\{[\gamma_i]\}$, as claimed.

\end{proof}

Combining these results, we obtain a necessary and sufficient condition for the transversal implementability of a $\pi/2^{m-1}$ rotation of a given logical $Z$ operator.

\begin{corollary}[Transversal implementability]
\label{cor:transversal_implementability}
Let $\overline{Z}(\gamma)$ be a logical $Z$ operator represented with $\gamma \in \ker H_X$.
A transversal implementation of its $\pi/2^{m}$ rotation exists if and only if there exists a sequence $\{\theta^{(\nu)}\}_{\nu=1}^{{m}}\, (\theta^{(\nu)} \in V_L^{(\nu)}, \theta^{(1)}=\gamma)$ such that
\begin{align}
\beta_1^{(\nu)}([\theta^{(\nu)}]) = 0,
\quad
\beta_2^{(\nu)}([\theta^{(\nu)}]) = 0
\end{align}
for all $\nu$.
\end{corollary}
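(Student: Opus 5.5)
The plan is to reduce the statement to \Cref{cor:multi_level_lifting} combined with \Cref{prop:inner_product_preservation}. First, make precise what a transversal implementation of the $\pi/2^{m}$ rotation of $\overline{Z}(\gamma)$ means. It is some $\hat{\theta} \in V_L^{(m+1)}$ (angle unit $\pi/2^m$) whose logical action coincides with $\exp(i\pi \inp{\gamma}{\cdot}/2^m)$ on the basis states $\ket{[\gamma_b]}$. By the computation preceding \Cref{prop:inner_product_preservation}, this amounts to $\inp{\hat{\theta}}{\gamma_b} \equiv \inp{\gamma}{\gamma_b}$ modulo $2^{m+1}$ for a basis $\{[\gamma_b]\}$ of $H^1(C;\mathbb{Z}_2)$. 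Here $\gamma$ is viewed as a $0/1$ vector, so the target pairings are $c_b\in\{0,1\}$. Equivalently, $\hat{\theta}$ is a multilevel lift of $\theta^{(1)}=\gamma$ with prescribed pairings. I would fix the indexing so that the sequence runs through the levels needed to reach $\hat{\theta}$, matching the corollary's range of $\nu$.

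For necessity, take such a $\hat{\theta}$ and set $\theta^{(\nu)} \coloneqq \hat{\theta} \bmod 2^\nu$. Then argue exactly as in the necessity part of \Cref{cor:multi_level_lifting}. Since $\widetilde{H}^{(\nu)}$ is a subblock of the top-level matrix, we get $\theta^{(\nu)} \in V_L^{(\nu)}$. Each $\theta^{(\nu+1)}$ is then a lift of $\theta^{(\nu)}$, so \Cref{thm:lifting-criterion} forces both obstructions to vanish. One point must be checked: $\theta^{(1)}$ agrees with $\gamma$ as an element of $V_{LD}^{(1)}$. This follows because the pairings of $\hat{\theta}$ mod $2$ with all $\gamma_b$ match those of $\gamma$, and the pairing $H_1\times H^1\to\mathbb{Z}_2$ is nondegenerate, so $\theta^{(1)}-\gamma\in \im H_Z^T = V_{LI}^{(1)}$. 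We may therefore take $\theta^{(1)}=\gamma$ up to a level-$1$ logical identity, which doesn't change $[\theta^{(1)}]$ or the obstructions, since these are well defined on classes.

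For sufficiency, apply \Cref{thm:lifting-criterion} inductively, modifying each lift with \Cref{prop:inner_product_preservation} so that the pairings stay equal to $c_b=\inp{\gamma}{\gamma_b}$ at every step. Start with $\theta^{(1)}=\gamma$, which has pairings $c_b$ mod $2$. Suppose $\hat{\theta}^{(\nu)}$ has been constructed in the class of $\theta^{(\nu)}$ with the correct pairings. The vanishing of $\beta_1^{(\nu)},\beta_2^{(\nu)}$, which depend only on the class, gives a lift. \Cref{prop:inner_product_preservation} then adjusts this lift to have pairings $c_b$ mod $2^{\nu+1}$. The adjustment adds $2^\nu\sum a_i\tilde{\gamma}_i$, which does not change the class modulo $2^\nu$.

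The main obstacle lies in this sufficiency chain. The hypothesis gives vanishing obstructions for the specific prescribed sequence $\theta^{(\nu)}$, but the constructed lift $\hat{\theta}^{(\nu+1)}$ need not lie in the class $[\theta^{(\nu+1)}]$. The obstructions are therefore not known to vanish at our constructed element. I would resolve this by reading the hypothesis as in \Cref{cor:multi_level_lifting}, where $\theta^{(\nu+1)}$ is itself a lift of $\theta^{(\nu)}$, i.e., $\theta^{(\nu+1)}\bmod 2^\nu$ lies in the class of $\theta^{(\nu)}$. We then use the given sequence directly and apply the inner-product correction only at the final level. That correction adds $2^{m}\sum a_i\tilde{\gamma}_i$, which keeps the element logical by the argument in \Cref{prop:inner_product_preservation}. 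The intermediate levels never require the pairing to be exact, since the logical action is only tested at the top level. If that reading is insufficient, I would instead add the compatibility condition explicitly and note that it is implicit in the paper's notion of a sequence of lifts.
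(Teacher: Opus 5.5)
Your necessity direction is fine. Your diagnosis of why a level-by-level correction does not close is also correct; the paper's one-line proof, which just cites \Cref{cor:multi_level_lifting} and \Cref{prop:inner_product_preservation}, passes over the same point. The gap is in your workaround: correcting the pairing only at the final level cannot work. \Cref{prop:inner_product_preservation} preserves a pairing that is already correct modulo $2^m$; it cannot create one. Any lift $\hat\theta$ of $\theta^{(m)}$ satisfies $\hat\theta \bmod 2^m \in \theta^{(m)}+V_{LI}^{(m)}$, and elements of $V_{LI}^{(m)}$ pair to $0 \bmod 2^m$ with all of $\ker H_Z$. Hence $\langle\hat\theta,\gamma_b\rangle \equiv \langle\theta^{(m)},\gamma_b\rangle \pmod{2^m}$ is fixed by the intermediate choices, and the shift $2^m\sum_i a_i\tilde\gamma_i$ only changes the top bit. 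A compatible chain only guarantees $\langle\theta^{(\nu)},\gamma_b\rangle\equiv c_b \pmod 2$, so the sentence ``the intermediate levels never require the pairing to be exact'' is false. Here is a concrete failure: the $[[15,1,3]]$ code with $\gamma=\gamma_1=1_{15}$ and $m=2$. The chain $\theta^{(1)}=\theta^{(2)}=1_{15}$ is compatible and has vanishing obstructions, because $1_{15}\in V_L^{(3)}$. Yet every lift has $\langle\hat\theta,1_{15}\rangle\equiv 15\equiv 3 \pmod 4$. After your correction the pairing is $3$ or $7 \bmod 8$, i.e.\ logical $T^3$ or $T^\dagger$, never $T$, even though $7\cdot 1_{15}$ does implement logical $T$.

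What is missing is control of the pairings at every level. One clean repair is to put this control into the hypothesis: require $\langle\theta^{(\nu)},\gamma_b\rangle\equiv c_b \pmod{2^\nu}$ along the compatible chain. Necessity is unaffected, since the reductions of an implementing $\hat\theta$ satisfy it, and sufficiency then needs only one application of \Cref{prop:inner_product_preservation} at the top. For $k=1$ you can instead rescale. If $c_1=1$, the final lift has odd pairing $p$, and $p^{-1}\hat\theta\in V_L^{(m+1)}$ has the same reduction mod $2$ and pairing $1$. If $c_1=0$, then $\gamma\in\im H_Z^T$ and $\hat\theta=0$ works. For $k\ge 2$ no single unit fixes all pairings. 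Correcting at an intermediate level shifts the class by a logical Pauli $[2^\nu\sum_i a_i\tilde\gamma_i]$ whose obstructions need not vanish, which is exactly the obstacle you flagged. So with the hypothesis as literally stated, this route does not establish sufficiency.

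Finally, your ``making precise'' step, which matches phases only on the states $\ket{[\gamma_b]}$, is strictly weaker than the literal rotation when $k\ge 2$. Take two Steane blocks with $\gamma=1_{14}$ and $m=1$. The obstructions vanish, since $3\cdot 1_{14}$ is a lift. But $\exp(i\frac{\pi}{4}(I-\overline{Z}_1\overline{Z}_2))$ entangles the blocks and cannot be transversal. So the statement can only be proved in that weaker, basis-state sense.
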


\begin{proof}
  This follows from \Cref{cor:multi_level_lifting} and \Cref{prop:inner_product_preservation}.
\end{proof}

Finally, we find that given a CSS code and one of its logical $Z$ operators, the existence of a transversal $\pi/2^m$ rotation reduces to the existence of a sequence of lifts $\{\theta^{(\nu)}\}_{\nu=1}^m$.
This is equivalent to the existence of $\alpha, \omega$, and $\eta$ satisfying the two lift conditions \Cref{eq:lift_condition_split_1} and \Cref{eq:lift_condition_split_2}.
Because these equations are linear equations over $\mathbb{Z}_{2^\nu}$, we can solve them by standard linear algebraic methods such as Gaussian elimination or Smith normal form, and any solution yields vanishing of the obstruction maps.

The lifting problem admits a natural homological interpretation.
Recall that $\widetilde{G}^{(m)}$ is a matrix whose columns are generators of $V_{LI}^{(m)}$, and $\widetilde{H}^{(m)}$ is a matrix whose kernel gives the set of logical operators at level $m$.
Using these matrices, the classification $V_{LD}^{(m)}$ of transversal logical diagonal gates at level $m$ can be expressed as
\begin{align}
\label{eq:logical_diagonal_gates_as_quotient}
V_{LD}^{(m)} = \ker \widetilde{H}^{(m)} \big/ \im \widetilde{G}^{(m)}.
\end{align}
Moreover, the matrices $\widetilde{G}^{(m)}$ and $\widetilde{H}^{(m)}$ satisfy 
\begin{align}
\widetilde{H}^{(m)} \widetilde{G}^{(m)} = 0 \pmod{2^m},
\end{align}
which follows directly from the definitions.

These observations allow us to define a chain complex over $\mathbb{Z}_{2^m}$:
\begin{align}
\mathcal{C}^{(m)} \colon
\mathcal{C}_2^{(m)}
\xrightarrow{\;\widetilde{G}^{(m)}\;}
\mathcal{C}_1^{(m)}
\xrightarrow{\;\widetilde{H}^{(m)}\;}
\mathcal{C}_0^{(m)},
\end{align}
where
\begin{align}
\mathcal{C}_2^{(m)} \coloneqq \mathbb{Z}_{2^m}^{d^{(m)}_{LI}},
\quad
\mathcal{C}_1^{(m)} \coloneqq \mathbb{Z}_{2^m}^{n},
\quad
\mathcal{C}_0^{(m)} \coloneqq \mathbb{Z}_{2^m}^{r_m},
\end{align}
and $r_m$ denotes the number of rows of $\widetilde{H}^{(m)}$.

The first homology group of this chain complex is
\begin{align}
H_1(\mathcal{C}^{(m)};\mathbb{Z}_{2^m}) \coloneqq \ker \widetilde{H}^{(m)} \big/ \im \widetilde{G}^{(m)},
\end{align}
and therefore agrees with the classification of transversal logical diagonal gates at level $m$:
\begin{align}
H_1(\mathcal{C}^{(m)};\mathbb{Z}_{2^m}) = V_{LD}^{(m)}.
\end{align}

For $m=1$, this reduces to the standard CSS chain complex with $\widetilde{H}^{(1)} = H_X$ and $\widetilde{G}^{(1)} = H_Z^T$, yielding
\begin{align}
H_1(\mathcal{C}^{(1)};\mathbb{Z}_2) = \ker H_X / \im H_Z^T,
\end{align}
which classifies logical Pauli $Z$ operators. Thus, the complexes $\mathcal{C}^{(m)}$ provide a natural extension of the CSS chain complex to higher-level transversal diagonal gates.

We now interpret the lifting problem in this framework. Given $\theta \in V_L^{(m)}$, its logical action corresponds to a homology class
\begin{align}
[\theta] \in H_1(\mathcal{C}^{(m)};\mathbb{Z}_{2^m}).
\end{align}
The lifting problem asks whether this class admits a lift to a class
\begin{align}
[\hat{\theta}] \in H_1(\mathcal{C}^{(m+1)};\mathbb{Z}_{2^{m+1}})
\end{align}
under consistency at level $m$.
From this viewpoint, the obstruction maps introduced in \Cref{def:beta1,def:beta2} measure the failure of such a lift to exist.

\section{Discussion}
\label{sec:discussion}

In \Cref{sec:classification}, we established a homological classification of transversal logical diagonal gates at each fixed level $m$, expressed as $V_{LD}^{(m)} \cong H^1(C;\mathbb{Z}_2) \otimes_H S_m$.
In \Cref{sec:lifting_problem}, we formulated the refinement to finer angles as a lifting problem and showed that its solvability is completely characterized by the vanishing of two obstruction maps $\beta_1^{(m)}$ and $\beta_2^{(m)}$.
Together, these results separate transversal implementability into a fixed-level homological structure and an inter-level obstruction-theoretic structure.

In this section, we develop further structural interpretations of this framework, the relationship between the chain-complex formulation of the lifting problem and the Bockstein homomorphisms, the Steane code as a concrete example of the lifting problem, and a reinterpretation of known algebraic conditions such as divisibility and triorthogonality within our framework.

\subsection{Chain complex and the lifting problem at higher levels}

We reinterpret the classification of transversal logical diagonal gates and the lifting problem in a homological framework.

As shown in the previous section, the classification of transversal logical diagonal gates at level $m$ can be expressed in terms of the homology group of a chain complex defined by the matrices $\widetilde{G}^{(m)}$ and $\widetilde{H}^{(m)}$.
Then, the lifting problem can be regarded as a question of whether a homology class in $H_1(\mathcal{C}^{(m)})$ admits a lift to a homology class in $H_1(\mathcal{C}^{(m+1)})$.

In this context, the obstruction maps $\beta_1^{(m)}$ and $\beta_2^{(m)}$ can be interpreted as measuring the failure of a homology class to admit such a lift, which can be clarified as follows.
Let us consider the case where we formally fix the boundary operators between levels $m$ and $m+1$,
\begin{align}
\widetilde{H}^{(m+1)} \to \widetilde{H}^{(m)}, \quad \widetilde{G}^{(m+1)} \to \widetilde{G}^{(m)},
\end{align}
and only the coefficient extension from $\mathbb{Z}_{2^m}$ to $\mathbb{Z}_{2^{m+1}}$ is considered.
Then, the boundary operators should satisfy the compatibility condition at level $m+1$,
\begin{align}
  \label{eq:compatibility_condition_formal}
  \widetilde{H}^{(m)} \widetilde{G}^{(m)} = 0 \pmod{2^{m+1}}.
\end{align}
In particular, from the decomposition of $\widetilde{H}^{(m+1)}$ in \Cref{eq:Htilde_recursive_definition}, the second block of $\widetilde{H}^{(m+1)}$ is formally set to zero,
\begin{align}
  \label{eq:formal_setting_Htilde_odot_K}
  \widetilde{H}^{(m)} \odot K \to 0.
\end{align}

In this setting, the obstruction maps become more transparent.
The first obstruction map $\beta_1^{(m)}$ always vanishes,
\begin{align}
\beta_1^{(m)}([\theta]) = [0\cdot \theta] = 0,
\end{align}
from the above formal condition (\Cref{eq:formal_setting_Htilde_odot_K}) on $\widetilde{H}^{(m)} \odot K$, so the first hurdle in the lifting problem is automatically cleared.
Furthermore, the second obstruction map $\beta_2^{(m)}$ reduces to a simpler form, given by
\begin{align}
\beta_2^{(m)}([\theta]) = \left[ \frac{1}{2^m} \widetilde{H}^{(m)} \theta \right] 
\in \mathbb{Z}_2^{r_m} \big/ \im \widetilde{H}^{(m)},
\end{align}
which is precisely the Bockstein homomorphism associated with the short exact sequence of coefficient rings,
\begin{align}
0 \to \mathbb{Z}_2 
\xrightarrow{\times 2^m} 
\mathbb{Z}_{2^{m+1}} 
\xrightarrow{\mod 2^m} 
\mathbb{Z}_{2^m} 
\to 0.
\end{align}
Therefore, the solvability of the lifting problem reduces to the vanishing of the Bockstein homomorphism, which is a standard obstruction-theoretic condition in algebraic topology for lifting homology classes under coefficient extensions.
This homomorphism has appeared only sporadically in the context of studies of quantum error correction~\cite{Feng:2025bww,Barkeshli:2022edm,Breuckmann:2024axh,Bauer:2024qpc}, and its relationship to the transversal logical diagonal gates has not been fully recognized, to our knowledge.

However, in our setting, the boundary operators $\widetilde{H}^{(m)}$ and $\widetilde{G}^{(m)}$ themselves depend on $m$: specifically, $\widetilde{H}^{(m+1)}$ has strictly more rows than $\widetilde{H}^{(m)}$ from the recursive construction, and the chain complex $\mathcal{C}^{(m)}$ changes between levels.
A standard Bockstein homomorphism, by contrast, arises when the chain complex is \emph{fixed} and only the coefficient ring is extended; here the complex and the coefficients change simultaneously.
As a result, $\beta_2^{(m)}$ is a Bockstein-type map rather than an exact Bockstein homomorphism.
In addition, the first obstruction map $\beta_1^{(m)}$ likewise reflects the additional constraints imposed by the extra rows of $\widetilde{H}^{(m+1)}$ that are absent from $\widetilde{H}^{(m)}$.

We leave a homological-algebra formulation of the lifting problem with the level-dependent chain complexes $\mathcal{C}^{(m)}$ for future work.
One possible approach is to regard each $\mathcal{C}_i^{(m)}$ as a submodule of $\mathcal{C}_i^{(m+1)}$ and study compatibility between the embedding map and the boundary operators.
If they are compatible with each other, the whole chain complex ${\mathcal{C}_i^{(m)}}$ can form a double complex defined by filtrations of $\mathcal{C}_i^{(m)}$ by level $m$, and we may analyze the lifting problem by spectral-sequence methods.

\subsection{Examples of the lifting problem: Steane code}

We illustrate our obstruction framework by analyzing the lifting problem for the Steane code~\cite{Steane:1996ghp}, which serves as a minimal nontrivial example demonstrating that the obstruction maps $\beta_1^{(m)}$ and $\beta_2^{(m)}$ can be nontrivial.
The results are summarized as follows.

\begin{corollary}[Steane code]
\label{cor:steane}
For the Steane code, $\beta_1^{(1)}$ and $\beta_2^{(1)}$ vanish for the unique logical class, so a level-$2$ lift exists and a transversal $S$ gate is implementable.
However, $\beta_1^{(2)}$ and $\beta_2^{(2)}$ do not vanish, so no level-$3$ lift exists: no transversal $T$ gate is implementable, even with non-uniform rotation angles $\theta \in \mathbb{Z}_8^7$.
\end{corollary}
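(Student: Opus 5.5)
The plan is to apply \Cref{thm:lifting-criterion} and \Cref{cor:multi_level_lifting} directly to the explicit matrices of the Steane code. Take $H_X = H_Z = H$, the $3\times 7$ parity-check matrix of the $[7,4,3]$ Hamming code. Then $\ker H_Z$ is the Hamming code, $\im H_X^T$ is its even-weight subcode (the simplex code, all nonzero words of weight $4$), and $H^1(C;\mathbb{Z}_2)\cong\mathbb{Z}_2$ is generated by $[1_7]$. Take $K$ to be a generator matrix of the Hamming code, for instance the three rows of $H$ together with $1_7$. The unique nontrivial logical class at level $1$ is represented by $\theta^{(1)} = 1_7 \in \ker H_X$.

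\emph{Level $1$.} To show $\beta_1^{(1)}([1_7])=0$ with $\alpha_0=0$, I will check that $(H\odot K)1_7 = 0 \pmod 2$. Each entry is $|h\wedge k| \equiv \inp{h}{k} \pmod 2$, which vanishes because $h\in\im H_X^T$ and $k\in\ker H_Z$ are orthogonal. For $\beta_2^{(1)}$, note that $\frac12 H 1_7$ has entries $\wt(h)/2 = 2 \equiv 0 \pmod 2$, so it vanishes with $\eta=0$ and $\omega=0$. The explicit lift is therefore $\hat\theta = 1_7\in\mathbb{Z}_4^7$, i.e.\ transversal $S$ (up to the standard Clifford correction of the logical phase). \Cref{prop:inner_product_preservation} guarantees that the pairing with $\gamma = 1_7$ can be kept equal to $1$.

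\emph{Level $2$.} This is the substantive part, because the claim quantifies over every possible $\theta^{(2)}$, not only over $1_7$. By \Cref{cor:multi_level_lifting}, I must show that no $\theta^{(2)}\in V_L^{(2)}$ with $\theta^{(2)}\equiv 1_7 \pmod 2$ up to $V_{LI}^{(1)}$ has vanishing $\beta_1^{(2)}$ and $\beta_2^{(2)}$. Such a $\theta^{(2)}$ has the form $1_7 + H^T a + 2\omega$.
\begin{enumerate}[label=(\alph*)]
\item First I compute $V_L^{(2)}=\ker\widetilde H^{(2)}$ and $V_{LI}^{(2)}=\ker\widetilde K^{(2)}$ over $\mathbb{Z}_4$ by Smith normal form, and confirm that $V_{LD}^{(2)}$ has the expected small size.
\item Next I evaluate $(\widetilde H^{(2)}\odot K)\theta \pmod 4$. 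It contains rows $2\,h\circ k\circ k'$, and for $\theta=1_7$ these contribute $2|h\wedge k\wedge k'| \pmod 4$. This is nonzero whenever the triple intersection is odd, for example a weight-$4$ simplex word and two Hamming words meeting in exactly one position. This exhibits the failure of triorthogonality of the Hamming code.
\item I then show that this vector does not lie in $\im\bigl((\widetilde H^{(2)}\odot K)\widetilde G^{(2)}\bigr)$, and that it stays outside this image for every admissible representative $\theta^{(2)}$. For representatives where $\beta_1^{(2)}$ happens to vanish, I instead show $\beta_2^{(2)}\neq 0$.
\end{enumerate}
The main obstacle is step (c), since it requires controlling all $\alpha$, $\eta$, $\omega$ and all representatives simultaneously. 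I would first exploit the transitive action of the automorphism group $GL(3,2)$ of the Hamming code to reduce the representatives $\theta^{(2)}$ to a few orbits. If that does not close the argument, the space is tiny ($\mathbb{Z}_4^7$ has $2^{14}$ elements, and $\omega\in\mathbb{Z}_2^7$), so an exhaustive finite check of the linear systems \Cref{eq:lift_condition_split_1} and \Cref{eq:lift_condition_split_2} over $\mathbb{Z}_4$ and $\mathbb{Z}_8$ settles it.

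As a cross-check, an independent argument should give the same conclusion. Any lift $\hat\theta\in\mathbb{Z}_8^7$ would give a diagonal logical gate whose action on the two codeword cosets, computed through the inclusion--exclusion identity, must equal a logical $T$. Summing the phases $e^{i\pi\inp{\hat\theta}{x}/4}$ over the odd coset of the Hamming code should contradict the level-$3$ constraints $4\,h\circ k\circ k'$. This confirms that $\beta_1^{(2)}$ and $\beta_2^{(2)}$ cannot both vanish, and hence that no transversal $T$ exists even for non-uniform $\theta\in\mathbb{Z}_8^7$.
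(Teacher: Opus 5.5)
Your level-1 argument matches the paper's: you check that $1_7$ satisfies the level-2 conditions and then fix the phase. Two small points there. $U(1_7/2)$ is logical $S^\dagger$, so the representative that gives $S$ is $3\cdot 1_7$. And your orthogonality of $\im H_X^T$ and $\ker H_Z$ relies on $H_X=H_Z$; it does not hold for a general CSS code.

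At level 2 your route is genuinely different from the paper's and is viable, but its crux, step (c), is only announced, not carried out. The paper never evaluates $\beta_1^{(2)}$ or $\beta_2^{(2)}$. It solves the level-3 logicality condition $H\theta\equiv 0\pmod 8$, $H_2\theta\equiv 0\pmod 4$, $H_3\theta\equiv 0\pmod 2$ directly and obtains the general solution $2c_0 1_7+4\sum_i c_i h_i$. From this, $\inp{\theta}{1_7}$ is always even and can never equal $1 \bmod 8$. The failure of the obstructions then follows from the sufficiency half of \Cref{thm:lifting-criterion}. Your closing cross-check points toward this argument, but as phrased (summing phases over the odd coset) it is only a sketch. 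The fact that actually does the work is that level-3 logicality forces every coordinate of $\hat\theta$ to be even.

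Step (c) needs neither a symmetry reduction nor brute force.
\begin{itemize}
\item The rows $1_7$, $h_i$, $h_i\circ h_j$ of $K\odot K$ are the monomials of degree at most $2$ evaluated on the seven nonzero points of $\mathbb{Z}_2^3$, i.e.\ the columns of $H$. They are independent, hence span $\mathbb{Z}_2^7$.
\item So $\widetilde K^{(2)}\theta\equiv 0\pmod 4$ forces $\theta$ to be even; in fact $V_{LI}^{(2)}=2\,\im H_X^T$.
\item Every column of $\widetilde G^{(2)}$ is therefore even. Hence the lower block $2\bigl((H\odot K)\odot K\bigr)\widetilde G^{(2)}$ of $(\widetilde H^{(2)}\odot K)\widetilde G^{(2)}$ vanishes mod $4$.
\item On the other hand, $h_i$ and $h_i\circ h_j$ span $1_7^\perp$, so $V_L^{(2)}\bmod 2\subseteq\{0,1_7\}$. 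Since $0\notin 1_7+\im H_Z^T$, every admissible $\theta^{(2)}$ reduces to $1_7$ mod 2, i.e.\ is odd in all coordinates.
\item The row $2\,h_1\circ h_2\circ h_3=2e_7$ (with $e_7$ the seventh unit vector) then gives the entry $2\theta^{(2)}_7\equiv 2\pmod 4$. No element of the image has a nonzero entry in that coordinate.
\end{itemize}
Hence $\beta_1^{(2)}\neq 0$ on both classes $[1_7]$ and $[3\cdot 1_7]$ of $V_{LD}^{(2)}\cong\mathbb{Z}_4$ lying over $[1_7]$. Your fallback branch for $\beta_2^{(2)}$ is therefore empty, since $\beta_2^{(2)}$ is only defined on $\ker\beta_1^{(2)}$.

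As for the trade-off: the paper's argument is shorter and never needs $\widetilde G^{(2)}$ or $T_1^{(2)}$, but it only shows that the two obstructions cannot both vanish. Your route, completed as above, verifies the corollary's obstruction-theoretic wording directly. It also identifies which obstruction actually fires: $\beta_1^{(2)}$, through the triple-intersection (triorthogonality) row.
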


We verify this by explicit computation below.
The Steane code is a self-dual $[[7,1,3]]$ CSS code defined by the $[7,4,3]$ Hamming code.
Let
\begin{align}
H \coloneqq
\begin{pmatrix}
0 & 0 & 0 & 1 & 1 & 1 & 1 \\
0 & 1 & 1 & 0 & 0 & 1 & 1 \\
1 & 0 & 1 & 0 & 1 & 0 & 1
\end{pmatrix}
\end{align}
Then $H_X = H_Z = H$.

To consider the lifting problem, we first need to construct the matrix $\widetilde{H}^{(m)}$, which is defined recursively with $H_X$ and $K$ (a basis of $\ker H_Z$).
A basis of $\ker H_Z$ can be chosen as the union of a representative of $H^1(C;\mathbb{Z}_2)$ and a basis of $\im H_X^T$ because
\begin{align}
  \ker H_Z & \cong (\ker H_Z/\im H_X^T) \oplus \im H_X^T 
  \\ &= H^1(C;\mathbb{Z}_2) \oplus \im H_X^T.
\end{align}
We take the logical $X$ representative as $\tilde{\gamma} \coloneqq 1_7$ and the remaining basis vectors as rows of $H_X^T$, giving a basis of $\ker H_Z$ as
\begin{align}
K \coloneqq
\begin{pmatrix}
1_7^T \\
H
\end{pmatrix}.
\end{align}

\textit{Level $m=2$ (transversal $S$ gate, no obstruction).}
The matrix $\widetilde{H}^{(2)}$ reduces to
\begin{align}
\widetilde{H}^{(2)} \simeq
\begin{pmatrix}
H \\
2H_2
\end{pmatrix}, \quad H_2 \coloneqq
\begin{pmatrix}
h_1 \circ h_2 \\
h_2 \circ h_3 \\
h_3 \circ h_1
\end{pmatrix},
\end{align}
where $h_i$ denotes the $i$-th row of $H$.
Note that the equality ($\simeq$) is not exact, meaning that the right-hand side is obtained by removing the redundant rows and performing row operations, which do not change the solution space of the logicality condition.

The logicality condition
\begin{align}
  \widetilde{H}^{(2)} \theta = 0 \pmod{4}
\end{align}
is equivalent to
\begin{subequations}
  \begin{empheq}[left=\empheqlbrace]{align}
    H\theta &= 0 \pmod{4}, \\
    H_2 \theta &= 0 \pmod{2}.
  \end{empheq}
\end{subequations}
The second equation gives the modulo 2 condition and corresponds to the first obstruction map $\beta_1^{(1)}$, while the first equation gives the modulo 4 condition and corresponds to the second obstruction map $\beta_2^{(1)}$.

The vector $1_7$ satisfies all these conditions, which means that the image of the obstruction maps vanishes,
\begin{align}
\beta_1^{(1)}([1_7]) = 0, \quad \beta_2^{(1)}([1_7]) = 0,
\end{align}
and therefore defines a transversal phase gate.
However, the logical phase is determined by
\begin{align}
\inp{1_7}{1_7} = 7 = 3 \pmod{4},
\end{align}
so $U(1_7/2)$ implements a logical $S^3 = S^\dagger$ gate rather than $S$.
To obtain the correct phase, we use inner-product preservation.
Adding $2\cdot 1_7$ gives
\begin{align}
\hat{\theta} = 3 \cdot 1_7,
\end{align}
which satisfies
\begin{align}
\inp{\hat{\theta}}{1_7} = 1 \pmod{4}.
\end{align}
Thus $U(3\cdot 1_7/2) = \prod_i S_i^\dagger$ implements the logical $S$ gate.

\textit{Level $m=3$ (transversal $T$ gate, obstruction).}
The matrix $\widetilde{H}^{(3)}$ reduces to
\begin{align}
\widetilde{H}^{(3)} \simeq
\begin{pmatrix}
  H \\
  2H_2 \\
  4H_3
\end{pmatrix},
\quad
H_3 \coloneqq
\begin{pmatrix}
h_1 \circ h_2 \circ h_3
\end{pmatrix},
\end{align}
and the first and second blocks ($H$ and $2H_2$) correspond to the second obstruction map $\beta_2^{(2)}$, while the third block ($4H_3$) corresponds to the first obstruction map $\beta_1^{(2)}$.
The logicality condition 
\begin{align}
  \label{eq:steane_logicality_condition_level3}
\widetilde{H}^{(3)} \theta = 0 \pmod{8}
\end{align}
reduces to
\begin{subequations}
\begin{empheq}[left=\empheqlbrace]{align}
H\theta &= 0 \pmod{8}, \\
H_2\theta &= 0 \pmod{4}, \\
H_3\theta &= 0 \pmod{2}.
\end{empheq}
\end{subequations}

Although we can verify the nonexistence of a lift of $1_7$ to level $3$ by tedious calculation, we instead show this by directly analyzing the solutions to the logicality condition at level 3 (\Cref{eq:steane_logicality_condition_level3}).
The general solution $\theta^{(3)}$ to it is given by
\begin{align}
\theta^{(3)} = 2c_0 1_7 + 4\sum_{i=1}^3 c_i h_i,
\end{align}
where $c_0 \in \mathbb{Z}_4$ and $c_1,c_2,c_3 \in \mathbb{Z}_2$ are free parameters; the vectors $2\cdot 1_7$ and $4 h_i$ form an adapted generating set (\Cref{eq:cyclic_decomposition}) of $V_L^{(3)}$, whose generators have orders $4$ and $2$, respectively.
This illustrates that $V_L^{(m)}$ is not a free $\mathbb{Z}_{2^m}$-module in general.
Because every solution is divisible by $2$, the inner product of $\theta^{(3)}$ with $1_7$ is always even:
\begin{align}
\inp{\theta^{(3)}}{1_7} = 0 \neq 1 \pmod{2}.
\end{align}

As we discussed in \Cref{prop:inner_product_preservation}, the logical $\pi/4$ rotation of the logical $Z$ operator requires any lift $\hat{\theta}$ to preserve the inner product, i.e. $\inp{\hat{\theta}}{1_7} = 1 \pmod{8}$.
However, any solution to the logicality condition at level $3$ has an inner product with $1_7$ that is always even and cannot be unity modulo $8$.
Therefore, we reach the conclusion that no lift of $1_7$ to level $3$ exists, and hence no transversal $T$ gate exists for the Steane code, even with non-uniform angles $\theta \in \mathbb{Z}_8^7$.

This result is consistent with the Eastin--Knill theorem~\cite{Eastin:2009tem}, which implies that the Steane code does not admit a transversal $T$ gate, and our framework provides a concrete obstruction-theoretic explanation for this fact.

\subsection{Interpretation of known transversal gate conditions}

We reinterpret several known algebraic conditions for transversal phase gates within the present framework.
In particular, we consider the divisible-code condition~\cite{Ward:1981divisible,Rengaswamy:2020fyi,Hu:2021snt} and the triorthogonality condition~\cite{Bravyi:2012lxn}.

Let us focus on transversal operators with uniform rotation angle, corresponding to $\theta = 1_n$.
In this case, the logical condition reduces to a constraint on the Hamming weight of rows of $\widetilde{H}^{(m)}$
\begin{align}
  (\widetilde{H}^{(m)} 1_n)_i  = \sum_{j=1}^n \widetilde{H}^{(m)}_{ij} = \wt(\widetilde{H}^{(m)}_i).
\end{align}
Thus, a logicality of $1_n$ is equivalent to requiring that each row of $\widetilde{H}^{(m)}$ has weight divisible by $2^m$.

We first consider divisible codes.
A CSS code is called $\Delta$-divisible if every $X$-stabilizer has weight divisible by $\Delta$.
For $\Delta = 4$ and $8$, these correspond to doubly-even and triply-even codes.
It is known that $8$-divisibility is a necessary condition for transversal $T$ gates with uniform angles~\cite{Rengaswamy:2020fyi}.

In our framework, this condition appears as part of the logical condition for $\theta = 1_n$.
Indeed, because $\widetilde{H}^{(m)}$ contains $H_X$ as its first block, we have
\begin{align}
(H_X 1_n)_i = \wt((H_X)_i) = 0 \pmod{2^m}.
\end{align}
Therefore, a $2^m$-divisibility of $X$-stabilizers is necessary for a level-$m$ transversal gate with uniform angles.
However, this condition is not sufficient in general, because additional constraints arise from higher Hadamard-product blocks of $\widetilde{H}^{(m)}$.

We next consider triorthogonal codes.
A matrix $G \in \mathbb{Z}_2^{m\times n}$ is called triorthogonal if
\begin{align}
\wt(G_i \circ G_j) &= 0 \pmod{2}, \\
\wt(G_i \circ G_j \circ G_k) &= 0 \pmod{2},
\end{align}
for all distinct indices.
From such a matrix, a CSS code is constructed by taking
\begin{align}
H_X \coloneqq G_e,
\quad
H_Z \coloneqq D,
\end{align}
where $G_e$ consists of even-weight rows of $G$, and $D$ spans $\ker G_e$.

In this setting, the logical condition at level $m=3$ includes constraints coming from triple Hadamard products.
In particular, the block $4 H_X \odot H_X \odot H_X$ in $\widetilde{H}^{(3)}$ gives
\begin{align}
(H_X \odot H_X \odot H_X) 1_n = 0 \pmod{2}.
\end{align}
Expanding this condition yields
\begin{align}
\wt((G_e)_i \circ (G_e)_j \circ (G_e)_k) = 0 \pmod{2},
\end{align}
for all $i,j,k$.
This includes the three cases
\begin{align}
\wt((G_e)_i) &= 0 \quad (i=j=k), \\
\wt((G_e)_i \circ (G_e)_j) &= 0 \quad (i\neq j = k), \\
\wt((G_e)_i \circ (G_e)_j \circ (G_e)_k) &= 0 \quad (\text{all distinct}),
\end{align}
because $(G_e)_i \circ (G_e)_i = (G_e)_i$ for any $\mathbb{Z}_2$ vector $(G_e)_i$.
These relationships recover the parity conditions underlying triorthogonality, and thus triorthogonality arises as a subset of the logical conditions for $\theta = 1_n$ at level $m=3$.
However, these conditions do not exhaust all constraints coming from $\widetilde{H}^{(3)}$, and hence are not sufficient for transversal $T$ gates in general.

In summary, these known algebraic conditions for transversal gates can be understood as necessary conditions that emerge from the logicality condition for specific choices of $\theta = 1_n$.
However, the full logicality condition for general $\theta$ includes additional constraints that are not captured by these conditions alone, so they are not sufficient in general.
This highlights the importance of the full homological framework developed in this work for systematically analyzing transversal gates beyond these special cases.

\section{Conclusion and outlook}
\label{sec:conclusion}

In this work, we have established a homological framework for transversal Pauli $Z$ rotations with discrete angles in quantum CSS codes, focusing on three central structural results.

First, at each fixed level $m$, transversal logical diagonal gates modulo logical identities admit a homological classification,
\begin{align}
V_{LD}^{(m)} \coloneqq V_L^{(m)} / V_{LI}^{(m)}
\cong H^1(C;\mathbb{Z}_2) \otimes_H S_m,
\end{align}
which extends the standard description of logical Pauli operators by incorporating level-dependent phase structure through Hadamard-product extensions.

Second, we formulated the refinement of transversal rotations as a lifting problem and proved that it is characterized within this framework by two obstruction maps,
\begin{align}
\beta_1^{(m)}, \quad \beta_2^{(m)}.
\end{align}
Their simultaneous vanishing gives a necessary and sufficient condition for the existence of a lift from level $m$ to $m+1$, providing a complete criterion for the existence of finer-angle transversal implementations.

Finally, we showed that these obstructions admit a homological interpretation.
In particular, when the boundary operators are formally fixed, the first obstruction $\beta_1^{(m)}$ vanishes automatically, and the second obstruction $\beta_2^{(m)}$ reduces to the Bockstein homomorphism associated with the short exact sequence of coefficient rings,
\begin{align}
0 \to \mathbb{Z}_2 \to \mathbb{Z}_{2^{m+1}} \to \mathbb{Z}_{2^m} \to 0,
\end{align}
so that the refinement problem becomes an obstruction problem for lifting homology classes under coefficient extension.
This identifies transversal implementability of logical diagonal gates as a Bockstein-type obstruction phenomenon, generalized by the presence of level-dependent constraints.

Overall, transversal logical diagonal gates in CSS codes are governed by homological data and their liftability, rather than by ad hoc algebraic conditions, providing a unified structural understanding of their existence.
Determining the existence of transversal implementations at finer angles requires more detailed information about the code structure than ordinary CSS homology; the obstruction maps provide a systematic way to extract this information.

Several directions for future work follow naturally.
First, it would be interesting to explore when the obstruction maps vanish for specific classes of codes, and to identify code properties that guarantee the existence of transversal gates at finer angles.
To this end, understanding the filtration structure of the chain complexes $\mathcal{C}^{(m)}$ may play a key role.
In addition, extending the present framework beyond diagonal gates, for instance to transversal finer-angle rotations of logical $X$ operators or general unitary operations, or to constructions combining diagonal gates with entangling operations (such as CZ-assisted transversal implementations), may reveal a further correspondence between quantum error correction and obstruction theory.

More broadly, our results suggest that homological methods provide a systematic language for analyzing fault-tolerant logical operations in quantum CSS codes. 
The appearance of the Bockstein-type obstruction indicates that transversal gates are controlled by obstruction theory, hinting at a deeper connection between quantum error correction and algebraic topology. 
We expect that similar homological structures may play a key role in other aspects of fault tolerance, including logical gates that preserve locality and structural constraints beyond diagonal operations.

\section*{Acknowledgments}
The author thanks Kohei Yamamoto and Yutaka Hirano at the University of Osaka for fruitful discussions.

\section*{Funding}
This work was supported by the Ministry of Education, Culture, Sports, Science and Technology (MEXT) Quantum Leap Flagship Program (MEXT Q-LEAP) Grant No. JPMXS0120319794, JST COI-NEXT Grant No. JPMJPF2014, JST Moonshot R\&D Grant Nos. JPMJMS2061 and JPMJMS256E, and JST CREST Grant No. JPMJCR24I3.

\bibliographystyle{apsrev4-2}
\bibliography{ref}

\begin{thebibliography}{30}%
\makeatletter
\providecommand \@ifxundefined [1]{%
 \@ifx{#1\undefined}
}%
\providecommand \@ifnum [1]{%
 \ifnum #1\expandafter \@firstoftwo
 \else \expandafter \@secondoftwo
 \fi
}%
\providecommand \@ifx [1]{%
 \ifx #1\expandafter \@firstoftwo
 \else \expandafter \@secondoftwo
 \fi
}%
\providecommand \natexlab [1]{#1}%
\providecommand \enquote  [1]{``#1''}%
\providecommand \bibnamefont  [1]{#1}%
\providecommand \bibfnamefont [1]{#1}%
\providecommand \citenamefont [1]{#1}%
\providecommand \href@noop [0]{\@secondoftwo}%
\providecommand \href [0]{\begingroup \@sanitize@url \@href}%
\providecommand \@href[1]{\@@startlink{#1}\@@href}%
\providecommand \@@href[1]{\endgroup#1\@@endlink}%
\providecommand \@sanitize@url [0]{\catcode `\\12\catcode `\$12\catcode
  `\&12\catcode `\#12\catcode `\^12\catcode `\_12\catcode `\%12\relax}%
\providecommand \@@startlink[1]{}%
\providecommand \@@endlink[0]{}%
\providecommand \url  [0]{\begingroup\@sanitize@url \@url }%
\providecommand \@url [1]{\endgroup\@href {#1}{\urlprefix }}%
\providecommand \urlprefix  [0]{URL }%
\providecommand \Eprint [0]{\href }%
\providecommand \doibase [0]{https://doi.org/}%
\providecommand \selectlanguage [0]{\@gobble}%
\providecommand \bibinfo  [0]{\@secondoftwo}%
\providecommand \bibfield  [0]{\@secondoftwo}%
\providecommand \translation [1]{[#1]}%
\providecommand \BibitemOpen [0]{}%
\providecommand \bibitemStop [0]{}%
\providecommand \bibitemNoStop [0]{.\EOS\space}%
\providecommand \EOS [0]{\spacefactor3000\relax}%
\providecommand \BibitemShut  [1]{\csname bibitem#1\endcsname}%
\let\auto@bib@innerbib\@empty
\bibitem [{\citenamefont {Zeng}\ \emph {et~al.}(2011)\citenamefont {Zeng},
  \citenamefont {Cross},\ and\ \citenamefont {Chuang}}]{Zeng:2007uho}%
  \BibitemOpen
  \bibfield  {author} {\bibinfo {author} {\bibfnamefont {B.}~\bibnamefont
  {Zeng}}, \bibinfo {author} {\bibfnamefont {A.}~\bibnamefont {Cross}},\ and\
  \bibinfo {author} {\bibfnamefont {I.~L.}\ \bibnamefont {Chuang}},\ }\href
  {https://doi.org/10.1109/TIT.2011.2161917} {\bibfield  {journal} {\bibinfo
  {journal} {IEEE Transactions on Information Theory}\ }\textbf {\bibinfo
  {volume} {57}},\ \bibinfo {pages} {6272} (\bibinfo {year} {2011})},\ \Eprint
  {https://arxiv.org/abs/0706.1382} {arXiv:0706.1382 [quant-ph]} \BibitemShut
  {NoStop}%
\bibitem [{\citenamefont {Bravyi}\ and\ \citenamefont
  {Koenig}(2013)}]{Bravyi:2012rnv}%
  \BibitemOpen
  \bibfield  {author} {\bibinfo {author} {\bibfnamefont {S.}~\bibnamefont
  {Bravyi}}\ and\ \bibinfo {author} {\bibfnamefont {R.}~\bibnamefont
  {Koenig}},\ }\href {https://doi.org/10.1103/PhysRevLett.110.170503}
  {\bibfield  {journal} {\bibinfo  {journal} {Physical Review Letters}\
  }\textbf {\bibinfo {volume} {110}},\ \bibinfo {pages} {170503} (\bibinfo
  {year} {2013})},\ \Eprint {https://arxiv.org/abs/1206.1609} {arXiv:1206.1609
  [quant-ph]} \BibitemShut {NoStop}%
\bibitem [{\citenamefont {Eastin}\ and\ \citenamefont
  {Knill}(2009)}]{Eastin:2009tem}%
  \BibitemOpen
  \bibfield  {author} {\bibinfo {author} {\bibfnamefont {B.}~\bibnamefont
  {Eastin}}\ and\ \bibinfo {author} {\bibfnamefont {E.}~\bibnamefont {Knill}},\
  }\href {https://doi.org/10.1103/PhysRevLett.102.110502} {\bibfield  {journal}
  {\bibinfo  {journal} {Physical Review Letters}\ }\textbf {\bibinfo {volume}
  {102}},\ \bibinfo {pages} {110502} (\bibinfo {year} {2009})},\ \Eprint
  {https://arxiv.org/abs/0811.4262} {arXiv:0811.4262 [quant-ph]} \BibitemShut
  {NoStop}%
\bibitem [{\citenamefont {Ward}(1981)}]{Ward:1981divisible}%
  \BibitemOpen
  \bibfield  {author} {\bibinfo {author} {\bibfnamefont {H.~N.}\ \bibnamefont
  {Ward}},\ }\href {https://doi.org/10.1007/BF01223730} {\bibfield  {journal}
  {\bibinfo  {journal} {Archiv der Mathematik}\ }\textbf {\bibinfo {volume}
  {36}},\ \bibinfo {pages} {485} (\bibinfo {year} {1981})}\BibitemShut
  {NoStop}%
\bibitem [{\citenamefont {Rengaswamy}\ \emph {et~al.}(2020)\citenamefont
  {Rengaswamy}, \citenamefont {Calderbank}, \citenamefont {Newman},\ and\
  \citenamefont {Pfister}}]{Rengaswamy:2020fyi}%
  \BibitemOpen
  \bibfield  {author} {\bibinfo {author} {\bibfnamefont {N.}~\bibnamefont
  {Rengaswamy}}, \bibinfo {author} {\bibfnamefont {R.}~\bibnamefont
  {Calderbank}}, \bibinfo {author} {\bibfnamefont {M.}~\bibnamefont {Newman}},\
  and\ \bibinfo {author} {\bibfnamefont {H.~D.}\ \bibnamefont {Pfister}},\
  }\href {https://doi.org/10.1109/JSAIT.2020.3012914} {\bibfield  {journal}
  {\bibinfo  {journal} {IEEE Journal on Selected Areas in Information Theory}\
  }\textbf {\bibinfo {volume} {1}},\ \bibinfo {pages} {499} (\bibinfo {year}
  {2020})},\ \Eprint {https://arxiv.org/abs/1910.09333} {arXiv:1910.09333
  [quant-ph]} \BibitemShut {NoStop}%
\bibitem [{\citenamefont {Hu}\ \emph {et~al.}(2022)\citenamefont {Hu},
  \citenamefont {Liang},\ and\ \citenamefont {Calderbank}}]{Hu:2021snt}%
  \BibitemOpen
  \bibfield  {author} {\bibinfo {author} {\bibfnamefont {J.}~\bibnamefont
  {Hu}}, \bibinfo {author} {\bibfnamefont {Q.}~\bibnamefont {Liang}},\ and\
  \bibinfo {author} {\bibfnamefont {R.}~\bibnamefont {Calderbank}},\ }\href
  {https://doi.org/10.22331/q-2022-09-08-802} {\bibfield  {journal} {\bibinfo
  {journal} {Quantum}\ }\textbf {\bibinfo {volume} {6}},\ \bibinfo {pages}
  {802} (\bibinfo {year} {2022})},\ \Eprint {https://arxiv.org/abs/2109.13481}
  {arXiv:2109.13481 [quant-ph]} \BibitemShut {NoStop}%
\bibitem [{\citenamefont {Haah}(2018)}]{Haah:2018uxe}%
  \BibitemOpen
  \bibfield  {author} {\bibinfo {author} {\bibfnamefont {J.}~\bibnamefont
  {Haah}},\ }\href {https://doi.org/10.1103/PhysRevA.97.042327} {\bibfield
  {journal} {\bibinfo  {journal} {Physical Review A}\ }\textbf {\bibinfo
  {volume} {97}},\ \bibinfo {pages} {042327} (\bibinfo {year} {2018})},\
  \Eprint {https://arxiv.org/abs/1709.08658} {arXiv:1709.08658 [quant-ph]}
  \BibitemShut {NoStop}%
\bibitem [{\citenamefont {Bravyi}\ and\ \citenamefont
  {Haah}(2012)}]{Bravyi:2012lxn}%
  \BibitemOpen
  \bibfield  {author} {\bibinfo {author} {\bibfnamefont {S.}~\bibnamefont
  {Bravyi}}\ and\ \bibinfo {author} {\bibfnamefont {J.}~\bibnamefont {Haah}},\
  }\href {https://doi.org/10.1103/PhysRevA.86.052329} {\bibfield  {journal}
  {\bibinfo  {journal} {Physical Review A}\ }\textbf {\bibinfo {volume} {86}},\
  \bibinfo {pages} {052329} (\bibinfo {year} {2012})},\ \Eprint
  {https://arxiv.org/abs/1209.2426} {arXiv:1209.2426 [quant-ph]} \BibitemShut
  {NoStop}%
\bibitem [{\citenamefont {Shi}\ \emph {et~al.}(2024)\citenamefont {Shi},
  \citenamefont {Lu}, \citenamefont {Kim},\ and\ \citenamefont
  {Sol{\'e}}}]{Shi:2024hqn}%
  \BibitemOpen
  \bibfield  {author} {\bibinfo {author} {\bibfnamefont {M.}~\bibnamefont
  {Shi}}, \bibinfo {author} {\bibfnamefont {H.}~\bibnamefont {Lu}}, \bibinfo
  {author} {\bibfnamefont {J.-L.}\ \bibnamefont {Kim}},\ and\ \bibinfo {author}
  {\bibfnamefont {P.}~\bibnamefont {Sol{\'e}}},\ }\href
  {https://doi.org/10.1007/s11128-024-04485-9} {\bibfield  {journal} {\bibinfo
  {journal} {Quantum Information Processing}\ }\textbf {\bibinfo {volume}
  {23}},\ \bibinfo {pages} {280} (\bibinfo {year} {2024})},\ \Eprint
  {https://arxiv.org/abs/2408.09685} {arXiv:2408.09685 [cs.IT]} \BibitemShut
  {NoStop}%
\bibitem [{\citenamefont {Jain}\ and\ \citenamefont
  {Albert}(2025)}]{Jain:2024zdq}%
  \BibitemOpen
  \bibfield  {author} {\bibinfo {author} {\bibfnamefont {S.~P.}\ \bibnamefont
  {Jain}}\ and\ \bibinfo {author} {\bibfnamefont {V.~V.}\ \bibnamefont
  {Albert}},\ }\href {https://doi.org/10.1109/JSAIT.2025.3570832} {\bibfield
  {journal} {\bibinfo  {journal} {IEEE Journal on Selected Areas in Information
  Theory}\ }\textbf {\bibinfo {volume} {6}},\ \bibinfo {pages} {127} (\bibinfo
  {year} {2025})},\ \Eprint {https://arxiv.org/abs/2408.12752}
  {arXiv:2408.12752 [quant-ph]} \BibitemShut {NoStop}%
\bibitem [{\citenamefont {Campbell}\ \emph {et~al.}(2012)\citenamefont
  {Campbell}, \citenamefont {Anwar},\ and\ \citenamefont
  {Browne}}]{Campbell:2012olh}%
  \BibitemOpen
  \bibfield  {author} {\bibinfo {author} {\bibfnamefont {E.~T.}\ \bibnamefont
  {Campbell}}, \bibinfo {author} {\bibfnamefont {H.}~\bibnamefont {Anwar}},\
  and\ \bibinfo {author} {\bibfnamefont {D.~E.}\ \bibnamefont {Browne}},\
  }\href {https://doi.org/10.1103/PhysRevX.2.041021} {\bibfield  {journal}
  {\bibinfo  {journal} {Physical Review X}\ }\textbf {\bibinfo {volume} {2}},\
  \bibinfo {pages} {041021} (\bibinfo {year} {2012})},\ \Eprint
  {https://arxiv.org/abs/1205.3104} {arXiv:1205.3104 [quant-ph]} \BibitemShut
  {NoStop}%
\bibitem [{\citenamefont {Anderson}\ \emph {et~al.}(2014)\citenamefont
  {Anderson}, \citenamefont {{Duclos-Cianci}},\ and\ \citenamefont
  {Poulin}}]{Anderson:2014jvy}%
  \BibitemOpen
  \bibfield  {author} {\bibinfo {author} {\bibfnamefont {J.~T.}\ \bibnamefont
  {Anderson}}, \bibinfo {author} {\bibfnamefont {G.}~\bibnamefont
  {{Duclos-Cianci}}},\ and\ \bibinfo {author} {\bibfnamefont {D.}~\bibnamefont
  {Poulin}},\ }\href {https://doi.org/10.1103/PhysRevLett.113.080501}
  {\bibfield  {journal} {\bibinfo  {journal} {Physical Review Letters}\
  }\textbf {\bibinfo {volume} {113}},\ \bibinfo {pages} {080501} (\bibinfo
  {year} {2014})},\ \Eprint {https://arxiv.org/abs/1403.2734} {arXiv:1403.2734
  [quant-ph]} \BibitemShut {NoStop}%
\bibitem [{\citenamefont {Anderson}\ and\ \citenamefont
  {{Jochym-O'Connor}}(2016)}]{Anderson:2014voa}%
  \BibitemOpen
  \bibfield  {author} {\bibinfo {author} {\bibfnamefont {J.~T.}\ \bibnamefont
  {Anderson}}\ and\ \bibinfo {author} {\bibfnamefont {T.}~\bibnamefont
  {{Jochym-O'Connor}}},\ }\href {https://doi.org/10.26421/QIC16.9-10-3}
  {\bibfield  {journal} {\bibinfo  {journal} {Quantum Information and
  Computation}\ }\textbf {\bibinfo {volume} {16}},\ \bibinfo {pages} {771}
  (\bibinfo {year} {2016})},\ \Eprint {https://arxiv.org/abs/1409.8320}
  {arXiv:1409.8320 [quant-ph]} \BibitemShut {NoStop}%
\bibitem [{\citenamefont {{Jochym-O'Connor}}\ \emph {et~al.}(2018)\citenamefont
  {{Jochym-O'Connor}}, \citenamefont {Kubica},\ and\ \citenamefont
  {Yoder}}]{Jochym-oconnor:2017odu}%
  \BibitemOpen
  \bibfield  {author} {\bibinfo {author} {\bibfnamefont {T.}~\bibnamefont
  {{Jochym-O'Connor}}}, \bibinfo {author} {\bibfnamefont {A.}~\bibnamefont
  {Kubica}},\ and\ \bibinfo {author} {\bibfnamefont {T.~J.}\ \bibnamefont
  {Yoder}},\ }\href {https://doi.org/10.1103/PhysRevX.8.021047} {\bibfield
  {journal} {\bibinfo  {journal} {Physical Review X}\ }\textbf {\bibinfo
  {volume} {8}},\ \bibinfo {pages} {021047} (\bibinfo {year} {2018})},\ \Eprint
  {https://arxiv.org/abs/1710.07256} {arXiv:1710.07256 [quant-ph]} \BibitemShut
  {NoStop}%
\bibitem [{\citenamefont {{Camps-Moreno}}\ \emph {et~al.}(2026)\citenamefont
  {{Camps-Moreno}}, \citenamefont {L{\'o}pez}, \citenamefont {Matthews},
  \citenamefont {Rengaswamy},\ and\ \citenamefont
  {{San-Jos{\'e}}}}]{Camps-moreno:2026usd}%
  \BibitemOpen
  \bibfield  {author} {\bibinfo {author} {\bibfnamefont {E.}~\bibnamefont
  {{Camps-Moreno}}}, \bibinfo {author} {\bibfnamefont {H.~H.}\ \bibnamefont
  {L{\'o}pez}}, \bibinfo {author} {\bibfnamefont {G.~L.}\ \bibnamefont
  {Matthews}}, \bibinfo {author} {\bibfnamefont {N.}~\bibnamefont
  {Rengaswamy}},\ and\ \bibinfo {author} {\bibfnamefont {R.}~\bibnamefont
  {{San-Jos{\'e}}}},\ }\href {https://doi.org/10.48550/arXiv.2601.21514}
  {\bibinfo {title} {Transversal gates for quantum {{CSS}} codes}} (\bibinfo
  {year} {2026}),\ \Eprint {https://arxiv.org/abs/2601.21514} {arXiv:2601.21514
  [cs]} \BibitemShut {NoStop}%
\bibitem [{\citenamefont {Tansuwannont}\ \emph {et~al.}(2025)\citenamefont
  {Tansuwannont}, \citenamefont {Takada},\ and\ \citenamefont
  {Fujii}}]{Tansuwannont:2025riy}%
  \BibitemOpen
  \bibfield  {author} {\bibinfo {author} {\bibfnamefont {T.}~\bibnamefont
  {Tansuwannont}}, \bibinfo {author} {\bibfnamefont {Y.}~\bibnamefont
  {Takada}},\ and\ \bibinfo {author} {\bibfnamefont {K.}~\bibnamefont
  {Fujii}},\ }\href {https://doi.org/10.48550/arXiv.2503.19790} {\bibinfo
  {title} {Clifford gates with logical transversality for self-dual {{CSS}}
  codes}} (\bibinfo {year} {2025}),\ \Eprint {https://arxiv.org/abs/2503.19790}
  {arXiv:2503.19790 [quant-ph]} \BibitemShut {NoStop}%
\bibitem [{\citenamefont {Tansuwannont}\ \emph {et~al.}(2026)\citenamefont
  {Tansuwannont}, \citenamefont {Chan},\ and\ \citenamefont
  {Takagi}}]{Tansuwannont:2026bmw}%
  \BibitemOpen
  \bibfield  {author} {\bibinfo {author} {\bibfnamefont {T.}~\bibnamefont
  {Tansuwannont}}, \bibinfo {author} {\bibfnamefont {T.}~\bibnamefont {Chan}},\
  and\ \bibinfo {author} {\bibfnamefont {R.}~\bibnamefont {Takagi}},\ }\href
  {https://doi.org/10.48550/arXiv.2602.09788} {\bibinfo {title} {Construction
  of the full logical {{Clifford}} group for high-rate quantum {{Reed-Muller}}
  codes using only transversal and fold-transversal gates}} (\bibinfo {year}
  {2026}),\ \Eprint {https://arxiv.org/abs/2602.09788} {arXiv:2602.09788
  [quant-ph]} \BibitemShut {NoStop}%
\bibitem [{\citenamefont {Webster}\ \emph {et~al.}(2023)\citenamefont
  {Webster}, \citenamefont {Quintavalle},\ and\ \citenamefont
  {Bartlett}}]{Webster:2023cpy}%
  \BibitemOpen
  \bibfield  {author} {\bibinfo {author} {\bibfnamefont {M.~A.}\ \bibnamefont
  {Webster}}, \bibinfo {author} {\bibfnamefont {A.~O.}\ \bibnamefont
  {Quintavalle}},\ and\ \bibinfo {author} {\bibfnamefont {S.~D.}\ \bibnamefont
  {Bartlett}},\ }\href {https://doi.org/10.1088/1367-2630/acfc5f} {\bibfield
  {journal} {\bibinfo  {journal} {New Journal of Physics}\ }\textbf {\bibinfo
  {volume} {25}},\ \bibinfo {pages} {103018} (\bibinfo {year} {2023})},\
  \Eprint {https://arxiv.org/abs/2303.15615} {arXiv:2303.15615 [quant-ph]}
  \BibitemShut {NoStop}%
\bibitem [{\citenamefont {Webster}\ \emph {et~al.}(2022)\citenamefont
  {Webster}, \citenamefont {Brown},\ and\ \citenamefont
  {Bartlett}}]{Webster:2022kdn}%
  \BibitemOpen
  \bibfield  {author} {\bibinfo {author} {\bibfnamefont {M.~A.}\ \bibnamefont
  {Webster}}, \bibinfo {author} {\bibfnamefont {B.~J.}\ \bibnamefont {Brown}},\
  and\ \bibinfo {author} {\bibfnamefont {S.~D.}\ \bibnamefont {Bartlett}},\
  }\href {https://doi.org/10.22331/q-2022-09-22-815} {\bibfield  {journal}
  {\bibinfo  {journal} {Quantum}\ }\textbf {\bibinfo {volume} {6}},\ \bibinfo
  {pages} {815} (\bibinfo {year} {2022})},\ \Eprint
  {https://arxiv.org/abs/2203.00103} {arXiv:2203.00103 [quant-ph]} \BibitemShut
  {NoStop}%
\bibitem [{\citenamefont {Calderbank}\ and\ \citenamefont
  {Shor}(1996)}]{Calderbank:1995dw}%
  \BibitemOpen
  \bibfield  {author} {\bibinfo {author} {\bibfnamefont {A.~R.}\ \bibnamefont
  {Calderbank}}\ and\ \bibinfo {author} {\bibfnamefont {P.~W.}\ \bibnamefont
  {Shor}},\ }\href {https://doi.org/10.1103/PhysRevA.54.1098} {\bibfield
  {journal} {\bibinfo  {journal} {Physical Review A: Atomic, Molecular, and
  Optical Physics}\ }\textbf {\bibinfo {volume} {54}},\ \bibinfo {pages} {1098}
  (\bibinfo {year} {1996})},\ \Eprint {https://arxiv.org/abs/quant-ph/9512032}
  {arXiv:quant-ph/9512032} \BibitemShut {NoStop}%
\bibitem [{\citenamefont {Steane}(1996)}]{Steane:1996ghp}%
  \BibitemOpen
  \bibfield  {author} {\bibinfo {author} {\bibfnamefont {A.~M.}\ \bibnamefont
  {Steane}},\ }\href {https://doi.org/10.1103/PhysRevLett.77.793} {\bibfield
  {journal} {\bibinfo  {journal} {Physical Review Letters}\ }\textbf {\bibinfo
  {volume} {77}},\ \bibinfo {pages} {793} (\bibinfo {year} {1996})}\BibitemShut
  {NoStop}%
\bibitem [{\citenamefont {Terhal}(2015)}]{Terhal:2013vbm}%
  \BibitemOpen
  \bibfield  {author} {\bibinfo {author} {\bibfnamefont {B.~M.}\ \bibnamefont
  {Terhal}},\ }\href {https://doi.org/10.1103/RevModPhys.87.307} {\bibfield
  {journal} {\bibinfo  {journal} {Reviews of Modern Physics}\ }\textbf
  {\bibinfo {volume} {87}},\ \bibinfo {pages} {307} (\bibinfo {year} {2015})},\
  \Eprint {https://arxiv.org/abs/1302.3428} {arXiv:1302.3428 [quant-ph]}
  \BibitemShut {NoStop}%
\bibitem [{\citenamefont {Bombin}\ and\ \citenamefont
  {{Martin-Delgado}}(2007)}]{Bombin:2006cd}%
  \BibitemOpen
  \bibfield  {author} {\bibinfo {author} {\bibfnamefont {H.}~\bibnamefont
  {Bombin}}\ and\ \bibinfo {author} {\bibfnamefont {M.~A.}\ \bibnamefont
  {{Martin-Delgado}}},\ }\href {https://doi.org/10.1063/1.2731356} {\bibfield
  {journal} {\bibinfo  {journal} {Journal of Mathematical Physics}\ }\textbf
  {\bibinfo {volume} {48}},\ \bibinfo {pages} {052105} (\bibinfo {year}
  {2007})},\ \Eprint {https://arxiv.org/abs/quant-ph/0605094}
  {arXiv:quant-ph/0605094} \BibitemShut {NoStop}%
\bibitem [{\citenamefont {Kitaev}(1997)}]{Kitaev:1997quantum}%
  \BibitemOpen
  \bibfield  {author} {\bibinfo {author} {\bibfnamefont {A.}~\bibnamefont
  {Kitaev}},\ }\href@noop {} {\bibfield  {journal} {\bibinfo  {journal}
  {Russian Mathematical Surveys}\ }\textbf {\bibinfo {volume} {52}},\ \bibinfo
  {pages} {1191} (\bibinfo {year} {1997})}\BibitemShut {NoStop}%
\bibitem [{\citenamefont {Haruna}(2026)}]{Haruna:2025piy}%
  \BibitemOpen
  \bibfield  {author} {\bibinfo {author} {\bibfnamefont {J.}~\bibnamefont
  {Haruna}},\ }\href {https://doi.org/10.1093/ptep/ptag105} {\bibfield
  {journal} {\bibinfo  {journal} {Progress of Theoretical and Experimental
  Physics}\ }\textbf {\bibinfo {volume} {2026}},\ \bibinfo {pages} {073B12}
  (\bibinfo {year} {2026})},\ \Eprint {https://arxiv.org/abs/2511.15224}
  {arXiv:2511.15224 [hep-th]} \BibitemShut {NoStop}%
\bibitem [{\citenamefont {Gottesman}\ and\ \citenamefont
  {Chuang}(1999)}]{Gottesman:1999tea}%
  \BibitemOpen
  \bibfield  {author} {\bibinfo {author} {\bibfnamefont {D.}~\bibnamefont
  {Gottesman}}\ and\ \bibinfo {author} {\bibfnamefont {I.~L.}\ \bibnamefont
  {Chuang}},\ }\href {https://doi.org/10.1038/46503} {\bibfield  {journal}
  {\bibinfo  {journal} {Nature}\ }\textbf {\bibinfo {volume} {402}},\ \bibinfo
  {pages} {390} (\bibinfo {year} {1999})},\ \Eprint
  {https://arxiv.org/abs/quant-ph/9908010} {arXiv:quant-ph/9908010}
  \BibitemShut {NoStop}%
\bibitem [{\citenamefont {Feng}\ \emph {et~al.}(2026)\citenamefont {Feng},
  \citenamefont {Xue}, \citenamefont {Kobayashi}, \citenamefont {Hsin},\ and\
  \citenamefont {Chen}}]{Feng:2025bww}%
  \BibitemOpen
  \bibfield  {author} {\bibinfo {author} {\bibfnamefont {Y.}~\bibnamefont
  {Feng}}, \bibinfo {author} {\bibfnamefont {H.}~\bibnamefont {Xue}}, \bibinfo
  {author} {\bibfnamefont {R.}~\bibnamefont {Kobayashi}}, \bibinfo {author}
  {\bibfnamefont {P.-S.}\ \bibnamefont {Hsin}},\ and\ \bibinfo {author}
  {\bibfnamefont {Y.-A.}\ \bibnamefont {Chen}},\ }\href
  {https://doi.org/10.48550/arXiv.2601.00064} {\bibinfo {title} {Pauli
  stabilizer formalism for topological quantum field theories and generalized
  statistics}} (\bibinfo {year} {2026}),\ \Eprint
  {https://arxiv.org/abs/2601.00064} {arXiv:2601.00064 [quant-ph]} \BibitemShut
  {NoStop}%
\bibitem [{\citenamefont {Barkeshli}\ \emph {et~al.}(2024)\citenamefont
  {Barkeshli}, \citenamefont {Chen}, \citenamefont {Hsin},\ and\ \citenamefont
  {Kobayashi}}]{Barkeshli:2022edm}%
  \BibitemOpen
  \bibfield  {author} {\bibinfo {author} {\bibfnamefont {M.}~\bibnamefont
  {Barkeshli}}, \bibinfo {author} {\bibfnamefont {Y.-A.}\ \bibnamefont {Chen}},
  \bibinfo {author} {\bibfnamefont {P.-S.}\ \bibnamefont {Hsin}},\ and\
  \bibinfo {author} {\bibfnamefont {R.}~\bibnamefont {Kobayashi}},\ }\href
  {https://doi.org/10.21468/SciPostPhys.16.4.089} {\bibfield  {journal}
  {\bibinfo  {journal} {SciPost Physics}\ }\textbf {\bibinfo {volume} {16}},\
  \bibinfo {pages} {089} (\bibinfo {year} {2024})},\ \Eprint
  {https://arxiv.org/abs/2211.11764} {arXiv:2211.11764 [cond-mat]} \BibitemShut
  {NoStop}%
\bibitem [{\citenamefont {Breuckmann}\ \emph {et~al.}(2026)\citenamefont
  {Breuckmann}, \citenamefont {Davydova}, \citenamefont {Eberhardt},\ and\
  \citenamefont {Tantivasadakarn}}]{Breuckmann:2024axh}%
  \BibitemOpen
  \bibfield  {author} {\bibinfo {author} {\bibfnamefont {N.~P.}\ \bibnamefont
  {Breuckmann}}, \bibinfo {author} {\bibfnamefont {M.}~\bibnamefont
  {Davydova}}, \bibinfo {author} {\bibfnamefont {J.~N.}\ \bibnamefont
  {Eberhardt}},\ and\ \bibinfo {author} {\bibfnamefont {N.}~\bibnamefont
  {Tantivasadakarn}},\ }\href {https://doi.org/10.1007/s00220-026-05570-z}
  {\bibfield  {journal} {\bibinfo  {journal} {Communications in Mathematical
  Physics}\ }\textbf {\bibinfo {volume} {407}},\ \bibinfo {pages} {86}
  (\bibinfo {year} {2026})},\ \Eprint {https://arxiv.org/abs/2410.16250}
  {arXiv:2410.16250 [quant-ph]} \BibitemShut {NoStop}%
\bibitem [{\citenamefont {Bauer}(2025)}]{Bauer:2024qpc}%
  \BibitemOpen
  \bibfield  {author} {\bibinfo {author} {\bibfnamefont {A.}~\bibnamefont
  {Bauer}},\ }\href {https://doi.org/10.22331/q-2025-03-25-1673} {\bibfield
  {journal} {\bibinfo  {journal} {Quantum}\ }\textbf {\bibinfo {volume} {9}},\
  \bibinfo {pages} {1673} (\bibinfo {year} {2025})},\ \Eprint
  {https://arxiv.org/abs/2403.12119} {arXiv:2403.12119 [quant-ph]} \BibitemShut
  {NoStop}%
\end{thebibliography}%

\appendix
\Crefname{section}{Appendix}{Appendices}

\section{Properties of the Hadamard product of free modules}
\label[appendix]{app:hadamard_product_vector_space}

Let $V$, $W$, $U$ be submodules of $R^n$ for a commutative ring $R$.
Recall that $V \otimes_H W \coloneqq \spn_R\{v \circ w : v \in V,\, w \in W\}$, where $(v \circ w)_i = v_i w_i$.

\begin{proposition}
\label{prop:hadamard_properties}
The Hadamard product module satisfies:
\begin{enumerate}[label=(\roman*)]
\item $V \otimes_H W$ is a submodule of $R^n$;
\item $V \otimes_H W = W \otimes_H V$;
\item $V \otimes_H (W_1 + W_2) = V \otimes_H W_1 + V \otimes_H W_2$ and $(\alpha V) \otimes_H W = \alpha(V \otimes_H W)$ for all $\alpha \in R$;
\item $(V \otimes_H W) \otimes_H U = V \otimes_H (W \otimes_H U)$.
\end{enumerate}
\end{proposition}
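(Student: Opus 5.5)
The plan is to argue each item directly from the definition $V \otimes_H W = \spn_R\{v \circ w\}$. The two tools are the elementary algebraic properties of the componentwise product on $R^n$ and the fact that a span is the smallest submodule containing its generators. Since $R$ is commutative, $\circ$ is a commutative, associative, $R$-bilinear operation on $R^n$: $(v+v')\circ w = v\circ w + v'\circ w$, $(\alpha v)\circ w = \alpha(v\circ w)$, $v\circ w = w \circ v$ and $(v\circ w)\circ u = v\circ(w\circ u)$, each checked coordinatewise from $(v\circ w)_i = v_i w_i$.

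Item (i) is immediate, since any $R$-span of a subset of $R^n$ is a submodule. For (ii), commutativity of $\circ$ shows that the generating sets $\{v\circ w\}$ and $\{w\circ v\}$ coincide, so the spans agree.

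For (iii), I would prove both inclusions. For $\supseteq$: we have $W_i \subset W_1+W_2$, so $V\otimes_H W_i \subset V\otimes_H(W_1+W_2)$ by monotonicity of the span. For $\subseteq$: a generator $v\circ(w_1+w_2)$ equals $v\circ w_1+v\circ w_2$ by bilinearity, and this lies in the right-hand side, which is a submodule. For the scalar identity, the generators of $(\alpha V)\otimes_H W$ are $(\alpha v)\circ w=\alpha(v\circ w)$. Hence $(\alpha V)\otimes_H W$ is spanned by $\alpha$ times the generators of $V\otimes_H W$. This span equals $\alpha(V\otimes_H W)$, because $\alpha\sum r_k (v_k\circ w_k)=\sum r_k\,\alpha(v_k\circ w_k)$, and both inclusions follow.

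Item (iv) is the only step needing care, because the outer product takes elements of the span $V\otimes_H W$, not just generators. The key lemma is this: if $X=\spn_R(A)$ and $Y=\spn_R(B)$, then $X\otimes_H Y=\spn_R\{a\circ b : a\in A, b\in B\}$. This follows by expanding $x\circ y$ with bilinearity, since $x=\sum r_k a_k$ and $y=\sum s_l b_l$ give $x\circ y=\sum r_k s_l (a_k\circ b_l)$. Applying the lemma with $A=\{v\circ w\}$ and $B=U$ gives $(V\otimes_H W)\otimes_H U=\spn\{(v\circ w)\circ u\}$. Symmetrically, $V\otimes_H(W\otimes_H U)=\spn\{v\circ(w\circ u)\}$. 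Associativity of $\circ$ then shows these generating sets are equal, which completes the proof.
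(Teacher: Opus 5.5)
Your proof is correct and takes the same approach as the paper, which derives all four items directly from the coordinatewise definition $(v\circ w)_i = v_i w_i$ together with commutativity, distributivity and associativity of multiplication in $R$. The paper gives one line per item; you spell out the bilinearity and span arguments it leaves implicit. In particular, your lemma that the Hadamard product of two spans is spanned by the products of their generators is exactly what is needed to pass from associativity of $\circ$ on vectors to associativity of $\otimes_H$ on submodules in (iv).
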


\begin{proof}
All properties follow from the componentwise definition $(v \circ w)_i = v_i w_i$ and the corresponding properties of multiplication in $R$: (i) $V \otimes_H W$ is an $R$-linear span, hence a submodule; (ii) the commutativity of $R$ gives $v_i w_i = w_i v_i$; (iii) the distributivity and scalar compatibility of $R$ pass to the span; (iv) the associativity of $R$ gives $(v_i w_i)u_i = v_i(w_i u_i)$.
\end{proof}

Next, we give a special property of the iterated Hadamard product of free modules over $\mathbb{Z}_{2}$.
The following proposition states that it forms an expanding sequence of submodules.
\begin{proposition}
\label{prop:iterated_hadamard_product_Z2}
Let $V$ be a submodule of $\mathbb{Z}_2^n$. Then the iterated Hadamard product of $V$ forms an expanding sequence of submodules:
\begin{align}
V^{\otimes_H k} \subset V^{\otimes_H (k+1)},
\end{align}
for any $k \geq 1$.
\end{proposition}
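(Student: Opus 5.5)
The key observation is that over $\mathbb{Z}_2$ every element is idempotent, $x^2 = x$ for $x \in \{0,1\}$. Consequently $v \circ v = v$ for every $v \in \mathbb{Z}_2^n$. I will use this to show that every generator of $V^{\otimes_H k}$ already lies in $V^{\otimes_H (k+1)}$. Since the latter is a submodule (\Cref{prop:hadamard_properties}(i)), the inclusion of the spans then follows.

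Concretely, I first note that, by associativity and the definition as a span (\Cref{prop:hadamard_properties}(iii),(iv)), $V^{\otimes_H k}$ is spanned over $\mathbb{Z}_2$ by the $k$-fold products $v_1 \circ v_2 \circ \cdots \circ v_k$ with $v_i \in V$. This uses the fact that a product of a span with a submodule is spanned by products of generators, which follows from bilinearity of $\circ$.

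Next, for such a generator I write
\begin{align}
v_1 \circ v_2 \circ \cdots \circ v_k = (v_1 \circ v_1) \circ v_2 \circ \cdots \circ v_k ,
\end{align}
using $(v_1)_i^2 = (v_1)_i$ componentwise. The right-hand side is a $(k+1)$-fold Hadamard product of elements of $V$, so it lies in $V^{\otimes_H (k+1)}$. Taking $\mathbb{Z}_2$-linear combinations then gives $V^{\otimes_H k} \subset V^{\otimes_H (k+1)}$.

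The only mild point needing care is the description of the iterated product by elementary products. Since the products are defined recursively as spans, I would justify this by induction on $k$, using distributivity of $\circ$ over sums. The case $k = 0$ (where $V^{\otimes_H 0} = \{1_n\}$) is excluded, consistent with the statement's $k \geq 1$: indeed $1_n$ need not lie in $V$.
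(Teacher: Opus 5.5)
Your proof is correct and follows the paper's argument: both use the idempotence $v \circ v = v$ over $\mathbb{Z}_2$ to rewrite each $k$-fold elementary product as a $(k+1)$-fold one, then extend by linearity. The differences are cosmetic: you duplicate the first factor $v_1$ where the paper duplicates the last factor $v_k$, and you spell out the routine induction showing that $V^{\otimes_H k}$ is spanned by elementary $k$-fold products, which the paper takes for granted.
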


\begin{proof}
For any $v \in V^{\otimes_H k}$, we can write $v$ as a linear combination of vectors of the form $v_1 \circ v_2 \circ \cdots \circ v_k$ with $v_i \in V$ as
\begin{align}
  v = \sum_j \alpha_j (v_{1,j} \circ v_{2,j} \circ \cdots \circ v_{k,j}),
\end{align}
where $\alpha_j \in \mathbb{Z}_2$.

Because each $v_{i,j}$ is a $\mathbb{Z}_2$-vector, it is invariant under the Hadamard product with itself:
\begin{align}
  v_{i,j} \circ v_{i,j} = v_{i,j}.
\end{align}
Using this property (for $i=k$ here), we can rewrite each term in the sum as
\begin{align}
  v = \sum_j \alpha_j (v_{1,j} \circ v_{2,j} \circ \cdots \circ v_{k,j} \circ v_{k,j}).
\end{align}
Because $v_{k,j} \in V$, we have $v_{1,j} \circ v_{2,j} \circ \cdots \circ v_{k,j} \circ v_{k,j} \in V^{\otimes_H (k+1)}$.
Thus, $v \in V^{\otimes_H (k+1)}$ and we conclude that $V^{\otimes_H k} \subset V^{\otimes_H (k+1)}$.
\end{proof}

Finally, we give the proof of the well-definedness of the Hadamard product of a quotient module and a submodule.
\begin{proposition}
\label{prop:hadamard_product_quotient_submodule}
Let $V,W,U \subset R^n$ be submodules and $W \subset V$. Then the Hadamard product of the quotient module $V/W$ and the submodule $U$ is well defined as a submodule of $R^n$:
\begin{align}
  (V/W) \otimes_H U \subset R^n.
\end{align}
\end{proposition}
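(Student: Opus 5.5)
The plan is to verify the three things the statement implicitly asserts about the construction of $(V/W)\otimes_H U$. First, the elementary product $[a]\circ c$ does not depend on the representative $a$. Second, the bilinearity needed for the $R$-span to make sense holds. Third, the resulting span is exactly the module $(V\otimes_H U)/(W\otimes_H U)$, all of whose elements are carried by vectors of $R^n$. This is the sense in which the definition places $(V/W)\otimes_H U$ among modules built from $R^n$.

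\emph{Step 1 (independence of representatives).} Let $a,a'\in V$ with $[a]=[a']$ in $V/W$, so $a'=a+w$ with $w\in W$. By componentwise distributivity (\Cref{prop:hadamard_properties}(iii)), we have $a'\circ c=a\circ c+w\circ c$. Since $w\in W$ and $c\in U$, the term $w\circ c$ lies in $W\otimes_H U$. Hence $[a'\circ c]_{W\otimes_H U}=[a\circ c]_{W\otimes_H U}$. We also need $W\otimes_H U\subset V\otimes_H U$ for the quotient to make sense. This follows from $W\subset V$, since every generator $w\circ c$ of $W\otimes_H U$ is also a generator of $V\otimes_H U$.

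\emph{Step 2 (module structure).} Using \Cref{prop:hadamard_properties}(i)--(iii), the map $([a],c)\mapsto [a]\circ c$ is $R$-bilinear:
\begin{align}
([a]+[b])\circ c=[a]\circ c+[b]\circ c,\qquad (r[a])\circ c=r([a]\circ c),
\end{align}
and similarly in $c$. Therefore $\spn_R\{[a]\circ c\}$ is an $R$-submodule of $(V\otimes_H U)/(W\otimes_H U)$. By \Cref{prop:hadamard_properties}(i), $V\otimes_H U$ and $W\otimes_H U$ are themselves submodules of $R^n$.

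\emph{Step 3 (identification).} Every generator $a\circ c$ of $V\otimes_H U$ maps to $[a]\circ c$. Hence the span of these elements is the whole quotient, which gives
\begin{align}
(V/W)\otimes_H U=(V\otimes_H U)/(W\otimes_H U).
\end{align}
Each element is represented by a vector $\sum_j r_j\,a_j\circ c_j\in V\otimes_H U\subset R^n$, determined up to $W\otimes_H U\subset R^n$. This is the precise content of the inclusion "$\subset R^n$" in the statement: the construction takes place entirely inside $R^n$ at the level of representatives, and it is consistent with the definition used in \Cref{thm:transversal-classification}.

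The only genuinely nontrivial point is Step 1. Its key input is that $W\otimes_H U$ absorbs the discrepancy $w\circ c$, which is exactly why the denominator of the quotient must be taken to be $W\otimes_H U$ rather than $W$. The remaining steps are formal consequences of the componentwise ring operations.
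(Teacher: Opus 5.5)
Your proof is correct, and its core, Step 1 (write $a'=a+w$, expand componentwise, and absorb $w\circ c$ into $W\otimes_H U$), is exactly the paper's argument; the paper's displayed subscript $V\otimes_H W$ should read $W\otimes_H U$, as you have it. Your extra checks ($W\otimes_H U\subset V\otimes_H U$, bilinearity, and the identification $(V/W)\otimes_H U=(V\otimes_H U)/(W\otimes_H U)$) are sound and make explicit what the paper simply takes as a definition in the proof of \Cref{thm:transversal-classification}. Your reading of ``$\subset R^n$'' as ``represented by vectors of $R^n$ modulo a submodule'' is the sensible one, since a quotient is not literally a submodule of $R^n$.
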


\begin{proof}
  If $v' = v + w$ for some $w \in W$, then
  \begin{align}
    [v'] \circ u & = [v' \circ u]_{V \otimes_H W} \\
    & = [(v + w) \circ u]_{V \otimes_H W} \\
    & = [v \circ u + w \circ u]_{V \otimes_H W} \\
    & = [v \circ u]_{V \otimes_H W} \\
     & = [v] \circ u,
  \end{align}
  where we have used the fact that $w \circ u \in W \otimes_H U$ because $u \in U$ and $w \in W$.
  So the product does not depend on the choice of representative $v$ in the equivalence class $[v]$, and thus is well defined.
\end{proof}

\section{Duality between quotients of submodules}
\label[appendix]{app:duality_quotient_submodule}

In this section, we establish a duality between quotients of submodules in $\mathbb{Z}_{2^m}^n$ under the standard nondegenerate inner product modulo $2^m$.

\begin{lemma}[Duality of quotients]
\label{lem:dual_quotient}
For finitely generated submodules $V \subset W \subset \mathbb{Z}_{2^m}^n$ with some integer $m \geq 1$, there is a natural isomorphism between the quotient $W/V$ and the homomorphism module $\Hom_{\mathbb{Z}_{2^m}}(V^\perp/W^\perp,\mathbb{Z}_{2^m})$:
\begin{align}
  W/V \cong \Hom_{\mathbb{Z}_{2^m}}(V^\perp/W^\perp,\mathbb{Z}_{2^m}).
\end{align}
In addition, there exists a (noncanonical) isomorphism between a finite $\mathbb{Z}_{2^m}$-module and its dual, so
\begin{align}
  V^\perp/W^\perp \cong W/V.
\end{align}
\end{lemma}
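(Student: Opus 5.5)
The plan is to build the isomorphism from the bilinear pairing $\inp{\cdot}{\cdot}$ modulo $2^m$. Write $R = \mathbb{Z}_{2^m}$. Define
\begin{align}
\Phi \colon W \to \Hom_{R}(V^\perp/W^\perp, R), \qquad \Phi(w)([x]) \coloneqq \inp{w}{x} \bmod 2^m .
\end{align}
First we check that $\Phi(w)$ is well defined on $V^\perp/W^\perp$. Since $V \subset W$, we have $W^\perp \subset V^\perp$. For $x' \in W^\perp$ we get $\inp{w}{x'} = 0$, so $\Phi(w)$ does not depend on the representative. Linearity in both arguments is immediate. Moreover, if $w \in V$ then $\inp{w}{x} = 0$ for every $x \in V^\perp$, so $V \subset \ker \Phi$ and $\Phi$ descends to $\bar\Phi \colon W/V \to \Hom_R(V^\perp/W^\perp, R)$.

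\textbf{Injectivity.} Here we need $(V^\perp)^\perp = V$ for submodules of $R^n$. Granting this, if $\Phi(w) = 0$ then $w \perp V^\perp$, so $w \in V^{\perp\perp} = V$, and $\bar\Phi$ is injective.

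\textbf{Double-orthogonal identity.} This is the main obstacle, because $R$ is not a field. I would prove it using the Smith normal form, which the paper already uses. Choose a generating matrix $A$ with $V = \im A$. Then $V^\perp = \ker A^T$, and there exist invertible $P, Q$ over $R$ with $A = P D Q$, where $D$ is diagonal with entries $2^{a_i}$. After this change of coordinates, $V = \bigoplus_i 2^{a_i} R$ (padding with zeros), and $V^\perp$ becomes $\bigoplus_i 2^{m-a_i} R$ in the dual coordinates given by $(P^{-1})^T$. Hence $V^{\perp\perp} = V$, and also
\begin{align}
|V|\,|V^\perp| = |R|^n .
\end{align}
The same counting follows abstractly from the fact that $R$ is self-injective (a Frobenius ring), so that $x \mapsto \inp{\cdot}{x}$ gives $R^n/V^\perp \cong \Hom_R(V, R)$.

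\textbf{Surjectivity by counting.} From the cardinality identity,
\begin{align}
|W/V| = \frac{|W|}{|V|} = \frac{|V^\perp|}{|W^\perp|} = |V^\perp/W^\perp| .
\end{align}
For a finite $R$-module $M$, $\Hom_R(M, R) \cong M$ noncanonically: decompose $M = \bigoplus_i \mathbb{Z}_{2^{m_i}}$ as in the cyclic decomposition of the paper, and use $\Hom_R(\mathbb{Z}_{2^{k}}, \mathbb{Z}_{2^m}) \cong \mathbb{Z}_{2^k}$ for $k \le m$, given by $1 \mapsto 2^{m-k}$. In particular $|\Hom_R(M,R)| = |M|$. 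So the injective map $\bar\Phi$ goes between finite sets of equal size and is therefore an isomorphism. This proves the first claim.

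\textbf{Second claim.} Composing $\bar\Phi$ with the noncanonical self-duality $\Hom_R(M,R) \cong M$ for $M = V^\perp/W^\perp$ gives $W/V \cong V^\perp/W^\perp$.
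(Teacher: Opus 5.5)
Your argument is correct. It shares with the paper the pairing map $\Phi([w])([x])=\inp{w}{x}$, the well-definedness checks, injectivity via $V^{\perp\perp}=V$, and the cyclic-decomposition argument for the noncanonical self-duality. The genuine divergence is in surjectivity.

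\begin{itemize}
\item[] \textbf{Paper's route.} The paper takes $\lambda$ and pulls it back to $V^\perp$. It extends this to all of $\mathbb{Z}_{2^m}^n$ using self-injectivity of $\mathbb{Z}_{2^m}$, and writes the extension as $\inp{w}{\cdot}$. Finally it uses $(W^\perp)^\perp=W$ to place $w$ in $W$.
\item[] \textbf{Your route.} You never construct a preimage. You close by counting instead, using $|V|\,|V^\perp|=2^{mn}$ and $|\Hom_{\mathbb{Z}_{2^m}}(M,\mathbb{Z}_{2^m})|=|M|$.
\end{itemize}

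What your route buys is self-containedness: everything reduces to the Smith normal form. In particular, you actually prove the double-orthogonal identity. The paper invokes this identity twice (for injectivity and for surjectivity) but justifies it only by ``nondegeneracy''. That justification is not enough over a general ring: over $\mathbb{Z}$ the standard form is nondegenerate, yet $(2\mathbb{Z})^{\perp\perp}=\mathbb{Z}$. The identity therefore genuinely rests on the Frobenius property of $\mathbb{Z}_{2^m}$, which your explicit computation $V^\perp=(P^T)^{-1}\bigl(\bigoplus_i 2^{m-a_i}\mathbb{Z}_{2^m}\bigr)$ verifies.

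What the paper's route buys is that it exhibits the preimage $w$ directly and never uses finiteness. It would therefore transfer unchanged to any self-injective coefficient ring with the double-annihilator property. Your counting step, by contrast, is tied to $\mathbb{Z}_{2^m}^n$ being a finite set, although it could be replaced by a composition-length argument.
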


\begin{proof}
Define a map $\Phi$ from the quotient to the homomorphism module,
\begin{align}
\Phi: W/V \to \Hom_{\mathbb{Z}_{2^m}}(V^\perp/W^\perp,\mathbb{Z}_{2^m}),
\end{align}
  by the value of $\Phi([w])$ on $[x] \in V^\perp/W^\perp$ as
\begin{align}
\Phi([w])([x]) \coloneqq \inp{w}{x} ,
\quad
w\in W,\ x\in V^\perp .
\end{align}

We first check that $\Phi$ is well defined.
If $w' = w+v$ with $v\in V$, then for any $x\in V^\perp$,
\begin{align}
\inp{w'}{x} = \inp{w}{x} + \inp{v}{x} = \inp{w}{x},
\end{align}
because $x\in V^\perp$.
Hence $\Phi([w])$ does not depend on the choice of representative $w$ of $[w]\in W/V$.
Similarly, if $x' = x+y$ with $y\in W^\perp$, then
\begin{align}
\inp{w}{x'} = \inp{w}{x} + \inp{w}{y} = \inp{w}{x},
\end{align}
because $w\in W$ and $y\in W^\perp$.
Thus $\Phi([w])$ depends only on the class $[x]\in V^\perp/W^\perp$.

Next, we show injectivity.
Suppose $\Phi([w])=0$ in $\Hom_{\mathbb{Z}_{2^m}}(V^\perp/W^\perp,\mathbb{Z}_{2^m})$, which means that 
\begin{align}
\Phi([w])([x]) = \inp{w}{x} = 0 \quad \text{for all } x\in V^\perp,
\end{align}
so $w\in (V^\perp)^\perp$.
Because the standard inner product on $\mathbb{Z}_{2^m}^n$ is nondegenerate, we have $(V^\perp)^\perp = V$.
Hence $w\in V$, so $[w]=0$ in $W/V$.
Therefore $\Phi$ is injective.

Finally, we show surjectivity.
Let
\begin{align}
\lambda \in \Hom_{\mathbb{Z}_{2^m}}(V^\perp/W^\perp,\mathbb{Z}_{2^m}).
\end{align}
Composing with the quotient map $V^\perp \to V^\perp/W^\perp$, we obtain a homomorphism
\begin{align}
\tilde\lambda: V^\perp \to \mathbb{Z}_{2^m}
\end{align}
that vanishes on $W^\perp$.

Because $\mathbb{Z}_{2^m}$ is a self-injective ring, the homomorphism $\tilde\lambda$ extends from the submodule $V^\perp \subset \mathbb{Z}_{2^m}^n$ to a homomorphism
\begin{align}
\Lambda: \mathbb{Z}_{2^m}^n \to \mathbb{Z}_{2^m}.
\end{align}
By nondegeneracy of the standard inner product, every linear functional on $\mathbb{Z}_{2^m}^n$ is of the form
\begin{align}
\Lambda(\cdot)=\langle w,\cdot\rangle
\end{align}
for a unique $w\in \mathbb{Z}_{2^m}^n$.
Because $\tilde\lambda$ vanishes on $W^\perp$, we have
\begin{align}
\langle w,y\rangle = 0
\quad
\text{for all } y\in W^\perp,
\end{align}
so $w\in (W^\perp)^\perp = W$.
Therefore $[w]\in W/V$ is well defined, and for every $[x]\in V^\perp/W^\perp$,
\begin{align}
\Phi([w])([x])=\langle w,x\rangle=\tilde\lambda(x)=\lambda([x]).
\end{align}
Thus $\Phi([w])=\lambda$, proving surjectivity.
Hence $\Phi$ is an isomorphism:
\begin{align}
  W/V \cong \Hom_{\mathbb{Z}_{2^m}}(V^\perp/W^\perp,\mathbb{Z}_{2^m}),
\end{align}
which is the first claim of the lemma.

In addition, every finite $R=\mathbb Z_{2^m}$-module is noncanonically isomorphic to its $R$-dual.
By the structure theorem based on Smith normal form, we can choose a decomposition
\begin{align}
  V^\perp/W^\perp \cong \bigoplus_i R/(2^{k_i}).
\end{align}
Using the standard identifications
\begin{align}
  R/(2^{k_i}) \cong \Hom_R(R/(2^{k_i}),R),
\end{align}
we obtain a noncanonical isomorphism
\begin{align}
  V^\perp/W^\perp \cong W/V,
\end{align}
which completes the proof.
\end{proof}

\section{Definition of the obstruction maps}
\label[appendix]{app:definition_of_obstruction_maps}

Here, we give a detailed definition of the obstruction maps $\beta_1^{(m)}$ and $\beta_2^{(m)}$.
\begin{definition}[First obstruction map $\beta_1^{(m)}$]
\label{def:beta1_detailed}
Let $[\theta]$ be an element of $V_{LD}^{(m)}$ and $\theta$ be a representative of $[\theta]$ in $\mathbb{Z}_{2^m}^n$.
The first obstruction map $\beta_1^{(m)}$ is defined as a map
\begin{align}
  \beta_1^{(m)}: V_{LD}^{(m)} \to T^{(m)}_1,
\end{align}
where the value of $\beta_1^{(m)}$ on $[\theta]$ is given by
\begin{align}
  \beta_1^{(m)}([\theta]) \coloneqq \left[(\widetilde{H}^{(m)} \odot K)\theta\right]
\end{align}
as an element of the quotient
\begin{align}
  \label{eq:beta1_target_space}
  T^{(m)}_1 \coloneqq \mathbb{Z}_{2^m}^{(n-m_z)r_m}/ \im\left((\widetilde{H}^{(m)} \odot K) \widetilde{G}^{(m)}\right),
\end{align}
where $r_m$ is the number of rows of $\widetilde{H}^{(m)}$.
\end{definition}

\begin{remark}
Because the dimension of $\ker H_Z$ is $n - \rank{H_Z} = n - m_z$, the matrix $K$ has $n-m_z$ rows, and therefore $\widetilde{H}^{(m)} \odot K$ has $(n-m_z)r_m$ rows.
\end{remark}

\begin{remark}
This obstruction map $\beta_1^{(m)}([\theta])$ is well defined, that is, independent of the choice of the representative $\theta$ of $[\theta]$ in $V_{LD}^{(m)}$, which can be shown as follows.
Suppose $\theta' = \theta + \widetilde{G}^{(m)} \alpha'$ for some $\alpha' \in \mathbb{Z}_{2^m}^{d_{LI}^{(m)}}$ is another representative of the same class in $V_{LD}^{(m)}$.
Then the change in $\beta_1^{(m)}$ is given by
\begin{align}
\beta_1^{(m)}([\theta']) - \beta_1^{(m)}([\theta]) = \left[(\widetilde{H}^{(m)} \odot K) \widetilde{G}^{(m)} \alpha'\right],
\end{align}
which is an element of the image of $(\widetilde{H}^{(m)} \odot K) \widetilde{G}^{(m)}$.
Therefore, the change corresponds to a trivial element in the quotient, and $\beta_1^{(m)}([\theta])$ is invariant under the change of representative.
\end{remark}

Next, we give a detailed definition of the second obstruction map $\beta_2^{(m)}$.
\begin{definition}[Second obstruction map $\beta_2^{(m)}$]
\label{def:beta2_detailed}
Let $\theta$ be a representative of a class in $V_{LD}^{(m)}$ such that $\beta_1^{(m)}([\theta]) = 0$.
The second obstruction map $\beta_2^{(m)}$ is defined as
\begin{align}
  \beta_2^{(m)}([\theta]) \coloneqq \left[\frac{1}{2^m}\widetilde{H}^{(m)} \bigl(\theta + \widetilde{G}^{(m)} \alpha_0\bigr)\right],
\end{align}
where $\alpha_0$ is any solution to~\Cref{eq:lift_condition_split_1}, and the equivalence class is taken in
\begin{align}
  \label{eq:beta2_target_space}
  T^{(m)}_2 \coloneqq \frac{\mathbb{Z}_2^{r_m}}{\eval{\im \frac{\widetilde{H}^{(m)} \widetilde{G}^{(m)}}{2^m}}_{\ker((\widetilde{H}^{(m)} \odot K) \widetilde{G}^{(m)})} + \im \widetilde{H}^{(m)}},
\end{align}
where $r_m$ is the number of rows of $\widetilde{H}^{(m)}$.
\end{definition}

\begin{remark}
This definition is independent of the choice of $\alpha_0$, which can be shown as follows.
Suppose $\alpha_0'$ is another solution to~\Cref{eq:lift_condition_split_1}, so that $\alpha_0' = \alpha_0 + \eta'$ for some $\eta' \in \ker((\widetilde{H}^{(m)} \odot K) \widetilde{G}^{(m)})$.
Then the change in the numerator of $\beta_2^{(m)}([\theta])$ is given by
\begin{align}
\frac{1}{2^m}\widetilde{H}^{(m)}
\widetilde{G}^{(m)} (\alpha_0' - \alpha_0)
=
\frac{1}{2^m}\widetilde{H}^{(m)} \widetilde{G}^{(m)} \eta', 
\end{align}
which is an element of the image of $\widetilde{H}^{(m)} \widetilde{G}^{(m)}/2^m$ restricted to the kernel of $(\widetilde{H}^{(m)} \odot K) \widetilde{G}^{(m)}$.
Therefore, the change corresponds to a trivial element in the quotient, and $\beta_2^{(m)}([\theta])$ is well defined independent of the choice of $\alpha_0$.
\end{remark}

\begin{remark}
Moreover, $\beta_2^{(m)}([\theta])$ is independent of the choice of representative $\theta$ of $[\theta]$ in $V_{LD}^{(m)}$.
Suppose $\theta' = \theta + \widetilde{G}^{(m)} \alpha'$ is another representative of the same class in $V_{LD}^{(m)}$.
Because $\beta_1^{(m)}$ is well defined on $V_{LD}^{(m)}$, we have $\beta_1^{(m)}([\theta']) = 0$ as well.
Set $\alpha_0' \coloneqq \alpha_0 - \alpha'$; then
\begin{align}
(\widetilde{H}^{(m)} \odot K)(\theta' + \widetilde{G}^{(m)} \alpha_0') 
& = (\widetilde{H}^{(m)} \odot K)(\theta + \widetilde{G}^{(m)} \alpha_0)
\\ & = 0 \pmod{2^m},
\end{align}
so $\alpha_0'$ is a valid solution to~\Cref{eq:lift_condition_split_1} for $\theta'$.
Using this choice,
\begin{align}
\frac{1}{2^m}\widetilde{H}^{(m)}(\theta' + \widetilde{G}^{(m)} \alpha_0')
=
\frac{1}{2^m}\widetilde{H}^{(m)}(\theta + \widetilde{G}^{(m)} \alpha_0),
\end{align}
so $\beta_2^{(m)}([\theta']) = \beta_2^{(m)}([\theta])$ in the quotient.
Therefore, $\beta_2^{(m)}$ is a well-defined map on $V_{LD}^{(m)}$.
\end{remark}

\end{document}